\documentclass[orivec,letterpaper,envcountsect,11pt]{llncs}

\usepackage{float}
\usepackage{comment}
\usepackage{color}
\usepackage{graphicx}
\usepackage{amssymb, amsmath, amsfonts}
\usepackage{bbm}

\usepackage[noend]{algorithmic}
\usepackage{comment}
\usepackage{xifthen}

\usepackage[singlespacing]{setspace}
\usepackage[papersize={8.5in,11in},left=1in,right=1in,top=1in,bottom=1in]{geometry}
\usepackage{nicefrac}

\usepackage[colorlinks=true,linkcolor=blue]{hyperref}
\usepackage[capitalise,noabbrev]{cleveref}
\usepackage[all]{hypcap}
\usepackage{bm}

\renewenvironment{proof}[1][]
    {\noindent
       \ifx&#1&{\it Proof.}
       \else{\it Proof ({#1}).}
       \fi}{\hfill $\blacksquare$}

\pagestyle{plain}

\DeclareMathOperator{\E}{E}

\spnewtheorem{fact}{Fact}{\bfseries}{}

\newcommand{\HH}{H}

\newfloat{algorithm}{tb}{log}
\floatname{algorithm}{Algorithm}

\newcommand{\lbinom}[2]{\tbinom{#1}{\downarrow #2}}

\newcommand{\assign}[1]{\hat{#1}}

\newcommand{\com}[1]{\begin{flushright}\small{$\vartriangleright$#1}\end{flushright}}

\newcommand{\fullhline}{~\\\begin{tabular}{p{\textwidth}}\hline \hspace{\textwidth}\end{tabular}\\[-2.1ex]}

\title{Improved Algorithms for Solving Polynomial Systems over GF(2) by Multiple Parity-Counting}

\author{Itai Dinur}

\institute{Department of Computer Science, Ben-Gurion University, Israel}

\begin{document}

\maketitle

\begin{abstract}

We consider the problem of finding a solution to a multivariate polynomial equation system of degree $d$ in $n$ variables over $\mathbb{F}_2$. For $d=2$, the best-known algorithm for the problem is by Bardet et al. [\textit{J. Complexity}, 2013] and was shown to run in time $O(2^{0.792n})$ under assumptions that were experimentally found to hold for random equation systems. The best-known worst-case algorithm for the problem is due to Bj\"{o}rklund et al. [ICALP'19]. It runs in time $O(2^{0.804n})$ for $d = 2$ and $O(2^{(1 - 1/(2.7d))n})$ for~$d > 2$.

In this paper, we devise a worst-case algorithm that improves the one by Bj\"{o}rklund et al. It runs in time $O(2^{0.6943n})$ (or $O(1.6181^n)$) for $d = 2$ and $O(2^{(1 - 1/(2d))n})$ for $d > 2$. Our algorithm thus outperforms all known worst-case algorithms, as well as ones analyzed for random equation systems. We also devise a second algorithm that outputs all solutions to a polynomial system and has similar complexity to the first (provided that the number of solutions is not too large).

A central idea in the work of Bj\"{o}rklund et al. was to reduce the problem of finding a solution to a polynomial system over $\mathbb{F}_2$ to the problem of counting the parity of all solutions. A parity-counting instance was then reduced to many smaller parity-counting instances. Our main observation is that these smaller instances are related and can be solved more efficiently by a new algorithm to a problem which we call \emph{multiple parity-counting}.

\keywords{Multivariate equation systems, polynomial method.}
\end{abstract}

\thispagestyle{empty}

\newpage
\setcounter{page}{1}

\section{Introduction}

We study the problem of solving a system of multivariate
polynomial equations over the field $\mathbb{F}_2$, which is a fundamental problem in
computer science. The input to this problem consists of a system of $m$ polynomials $E = \{P_1,\ldots,P_m\}$ such that for each $j = 1,\ldots,m$, $P_j \in \mathbb{F}_2[x_1,\ldots,x_n]$
is given by its algebraic normal form (ANF) as a sum of monomials.
The goal is to find a satisfying assignment to the system, namely
$\assign{x} = (\assign{x}_1, \ldots, \assign{x}_n) \in \{0,1\}^n$ such that $P_j(\assign{x})=0$
for every $j \in \{1,\ldots,m\}$,
or to determine that such an assignment does not exist. An additional important parameter of a polynomial system is its degree $d$ which bounds the algebraic degree of its polynomials. Typically, $d$ is assumed to be a small constant.

If all polynomials in $E$ are linear (i.e., $d=1$), then the system can be efficiently solved (e.g., by Gaussian elimination). However, the problem is known to be NP-hard already for quadratic systems (namely, $d=2$). Moreover, assuming the exponential time hypothesis~\cite{ImpagliazzoP01}, there exists no subexponential time (worst-case) algorithm for this problem.
Yet, devising the most efficient exponential time algorithm for solving polynomial systems over $\mathbb{F}_2$ (and quadratic systems over $\mathbb{F}_2$ in particular) is an interesting and active research problem. It is particularly relevant to the domain of cryptography, as the security of various cryptosystems (known as multivariate cryptosystems) is directly based on the conjectured hardness of solving these systems. Examples of such ciphers include HFE by Patarin~\cite{Patarin96}, UOV by Kipnis, Patarin and Goubin~\cite{KipnisPG99}, and more recent cryptosystems such as GeMSS~\cite{Casanova17} which is a second round candidate signature scheme in NIST's post-quantum standardization project~\cite{NIST}.

\subsection{Previous Work}

Several restricted classes of non-linear equation systems over $\mathbb{F}_2$ are known to have polynomial time algorithms. Examples include extremely under-determined ($n > m(m+1)/2$) or over-determined ($m > n(n+1)/2$) quadratic systems (in the latter case an efficient algorithm -- based on Gaussian elimination -- exists for most instances). Various works investigate algorithms that interpolate between one of these extreme cases and the case of $m = n$ (e.g.,~\cite{ThomaeW12} focuses on under-determined systems).

\paragraph{Algebraic techniques.} A common approach to the general problem of solving polynomial systems over finite fields, is to use algebraic techniques in order to find a convenient representation of the ideal generated by the polynomials in the form of a reduced Gr\"{o}bner basis. Well-known algorithms for computing Gr\"{o}bner bases include Buchberger’s algorithm~\cite{Buchberger65}, F4~\cite{faugere99} and F5~\cite{Faugere02}, but the asymptotic complexity of these algorithms is not well-understood. At a high level, Gr\"{o}bner basis algorithms and their variants represent the polynomials in the input system, along with many of their multiples, using a matrix and then attempt to reduce it via elimination techniques. Another algorithm that employs related methods is the XL algorithm~\cite{CourtoisKPS00} which was developed for cryptanalytic purposes and has led to several variants. In particular, in~\cite{YangC04} Yang and Chen developed an XL variant whose complexity for solving quadratic equations over $\mathbb{F}_2$ with $m=n$ was shown to be $O(2^{0.875n})$ under some algebraic assumptions.

In~\cite{BardetFSS13}, Bardet et al. devised an algorithm for solving quadratic polynomial systems over $\mathbb{F}_2$ based on a hybrid approach that combines exhaustive search over a subset of the variables with elimination techniques. Under some algebraic assumptions on the input system which were experimentally found to hold for random systems, the authors bounded the complexity of the deterministic variant of their algorithm by $O(2^{0.841n})$, while the expected complexity of the Las Vegas variant was bounded by $O( 2^{0.792n})$. More recently, Joux and Vitse developed a new algorithm based on a different hybrid approach and showed that it outperforms in practice previous algorithms for a wide range of parameters~\cite{JouxV17}. However, analyzing the asymptotic complexity of this algorithm seems difficult and is a very interesting open problem.

\paragraph{The polynomial method.} In~\cite{LokshtanovPTWY17} Lokshtanov et al. presented the first worst-case algorithms for solving polynomial equations over finite fields with exponential speedup over exhaustive search. In particular, their randomized algorithm for solving equations over $\mathbb{F}_2$ has runtime of $O( 2^{0.8765n})$ for quadratic systems and $O(2^{(1- 1/(5d)) n})$ in general. In more recent work, Bj\"{o}rklund, Kaski and Williams~\cite{BjorklundK019} revisited the algorithms of Lokshtanov et al. for polynomial systems over $\mathbb{F}_2$ and improved their complexity to $O(2^{0.804n})$ for $d=2$ and $O(2^{(1- 1/(2.7d)) n})$ in general.

These new algorithms are based on the so-called \emph{polynomial method} in circuit complexity~\cite{Beigel93} that has been recently applied in algorithm design~\cite{Williams14}.
At a high level, the new algorithms represent the system $E$ by the single polynomial over $\mathbb{F}_2$, $$F(x) = (1+P_1(x))(1+P_2(x)) \ldots (1+P_m(x))$$
that evaluates to 1 (only) on solutions to $E$. However, the ANF of $F(x)$ generally has a huge number of terms and hence is replaced with a probabilistic polynomial which agrees with it on most assignments, but its ANF has a smaller number of terms. The probabilistic polynomial is then evaluated efficiently on many carefully chosen assignments (using fast multipoint evaluation techniques), leading to an exponential advantage over exhaustive search.

\subsection{Our Results}

In this paper we present a randomized worst-case algorithm for solving polynomial systems of degree $d$ over $\mathbb{F}_2$ with better asymptotic complexity than all previously published algorithms.
\begin{theorem}
\label{thm:main}
There is a randomized algorithm that given a system $E$ of polynomial equations over $\mathbb{F}_2$ with degree at most $d$ in $n$ variables, finds a solution to $E$ or correctly decides that a solution does not exist with high probability. The runtime of the algorithm is bounded by $O(2^{0.6943n})$ for $d=2$ and by $O(2^{(1 - 1/(2d))n})$ for $d > 2$.
\end{theorem}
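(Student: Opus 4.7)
The plan is to build on the framework of Bj\"orklund, Kaski, and Williams (BKW) and gain an advantage by solving many of their sub-instances jointly rather than independently, via the \emph{multiple parity-counting} primitive advertised in the abstract.

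First I would reduce the search problem to parity counting. Using standard self-reducibility together with Valiant--Vazirani style isolation, finding a satisfying assignment to $E$ reduces (with an acceptable polynomial overhead and high success probability) to computing $\bigoplus_{x \in \{0,1\}^n} F(x)$ where $F(x) = \prod_{j} (1 + P_j(x))$, on several modified systems. Thus it suffices to solve parity counting in time $O(2^{cn})$ with $c = 0.6943$ for $d=2$ and $c = 1 - 1/(2d)$ for $d>2$.

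Next I would replace $F$ with a Razborov--Smolensky style probabilistic polynomial $\tilde F$ over $\mathbb{F}_2$ of ANF-degree at most $\delta n$ (for a parameter $\delta$ to be optimized), whose sum agrees with $\bigoplus F(x)$ with high probability after averaging a polynomial number of independent trials. The task becomes: given an explicit polynomial $Q$ in $n$ variables of degree at most $\delta n$, compute $\bigoplus_x Q(x)$. Following BKW I would partition the variables into a ``restriction block'' of size $n-\ell$ and a ``summation block'' of size $\ell$, and for every assignment $z \in \{0,1\}^{n-\ell}$ to the restriction block, compute $S(z) = \bigoplus_{y \in \{0,1\}^\ell} Q(y,z)$. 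Naively one could solve each $z$ independently via fast multipoint / M\"obius evaluation, but the $2^{n-\ell}$ residuals all come from the same master polynomial $Q$, so the problem is exactly an instance of multiple parity-counting.

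The central new step is an algorithm for multiple parity-counting: compute $S(z)$ for all $z$ simultaneously in time that beats the block-by-block bound of BKW. I would organize $Q$ by the degree split of its monomials between the $y$-variables and $z$-variables and, for each fixed $y$-part $y^{A}$, sum over $y$ the coefficient polynomials in $z$; this isolates a collection of polynomials in $z$ of controlled degree whose sums can be tabulated by a single large M\"obius/zeta transform plus fast multiplication of truncated polynomials over $\mathbb{F}_2$. The resulting cost separates into a term $2^\ell$ for processing the $y$-side and a term $2^{n-\ell}\cdot \mathrm{poly}(n)$ for tabulating over $z$, rather than their product; balancing $\ell$ against $\delta n$ (subject to the probabilistic-polynomial degree constraint and the combinatorial bound on the number of degree-$\le d\delta n$ monomials) yields exactly the exponents $0.6943$ for $d=2$ and $1-1/(2d)$ for $d>2$.

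The main obstacle, and where I expect most of the work to lie, is the multiple parity-counting algorithm itself and its tight analysis. One needs a data structure and evaluation scheme that simultaneously exploits (i) the bounded ANF-degree of $Q$, so that $Q$ has only $\binom{n}{\le \delta n}$ monomials, and (ii) the common structure across the $2^{n-\ell}$ residual instances, while also matching the optimization to the tail inequality governing $\delta$. Getting the two exponents to meet in a single balanced expression, and in particular improving the $d=2$ constant from $0.804$ to $0.6943$, forces a careful choice of the split point $\ell$ as a function of $\delta$ and a careful accounting of the lower-order polynomial factors so they do not inflate the exponent.
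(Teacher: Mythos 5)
Your high-level framework is aligned with the paper: reduce search to parity-counting via Valiant--Vazirani, use Razborov--Smolensky probabilistic polynomials, split variables into a restriction block and a summation block, and observe that the sub-instances produced by the BKW recursion share a common source and should be solved \emph{jointly} via ``multiple parity-counting.'' That observation is indeed the paper's engine. However, the mechanism you propose for the multiple parity-counting step does not match how the paper actually achieves the savings, and as described it would not work.

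The gap is the recursion. You propose a one-shot approach: write $Q = \tilde F$ organized by the degree split of its monomials between $y$- and $z$-variables, extract coefficient polynomials in $z$, and tabulate their sums by a M\"obius/zeta transform plus fast truncated multiplication over $\mathbb{F}_2$, claiming the cost separates additively into $2^\ell + 2^{n-\ell}\cdot\mathrm{poly}(n)$. The obstacle is that $\tilde F = \prod_{i=1}^{\ell}(1+R_i)$ has degree $d\ell$ where $\ell$ must be (roughly) the size of the summation block plus $O(\log n)$; with the parameters that would give the claimed exponents, $d\ell$ is a constant fraction of $n$ exceeding $n/2$ for $d=2$, so the truncated ANF of $\tilde F$ has $\binom{n}{\le d\ell}\approx 2^n$ terms and can neither be expanded, stored, nor multiplied within the target time. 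There is no additive cost separation; nothing you have written forces it. What the paper does instead is \emph{recurse}: it further partitions the $n_1$ summed variables $z$ into $(u,v)$ with a smaller inner summation block $v$ of size $n_2$, defines $G(y,u)=\sum_{\hat v}\tilde F(y,u,\hat v)$ of degree $d\ell-n_2$, and observes that the $\lbinom{n-n_2}{d\ell-n_2}$ evaluations of $G$ needed to interpolate it on $W^{n-n_2}_{d\ell-n_2}$ are \emph{exactly one more instance} of multiple parity-counting over the system $\{R_i\}$, with the summation block shrunk from $n_1$ to $n_2$. The shrink per level is a small parameter $\lambda n$, giving $O(1/\lambda)$ recursion levels, and each level costs $O^*\!\bigl(2^{f_d(\rho_i)n}\bigr)$ where $f_d(p)=(1-p)\HH\bigl(\tfrac{(d-1)p}{1-p}\bigr)$; balancing the top-level $2^{(1-\kappa_0)n}$ against $\max_i 2^{f_d(\rho_i)n}$ yields $\tau(d)=\max_{p\in[0,1/(2d-1)]}f_d(p)$, with $\tau(2)=\log\varphi$. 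Without this nested interpolation there is no way to control the ANF explosion of $\tilde F$, and the constant in your claimed separation would not come out to $\tau(d)$.

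Two smaller points: (i) you never make explicit the constraint $\ell\ge n_2 + \Theta(\log n)$ (the paper uses $\ell=n_2+2$) needed for the union bound over the inner summation, which is what ties the probabilistic-polynomial degree to the inner block size and hence drives the interpolation set $W^{n-n_2}_{d\ell-n_2}$; (ii) you omit the error-amplification scoreboard (recomputing $t=\Theta(n)$ independent $\tilde F$'s and taking majorities per restriction) which is needed to make the per-restriction estimates reliable enough to survive the union bound at all recursion levels.
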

We note that for $d=2$, the complexity of our algorithm can be made arbitrarily close to $O(\varphi^n)$, where $\varphi = \tfrac{1}{2}(1 + \sqrt{5})$ is the golden ratio. Furthermore, the stated complexity bound for $d > 2$ is somewhat loose. A more precise (but rather unwieldy) complexity bound for our algorithm is given in Theorem~\ref{thm:main2}. For example, using the precise formula, we bound the complexities of our algorithm for $d=3$ and $d=4$ by $O(2^{0.8114n})$ and $O(2^{0.8633n})$, respectively.

In addition, we consider the problem of outputting all the solutions to a given system.
\begin{theorem}
\label{thm:exhaust}
There is a randomized algorithm that given a system $E$ of polynomial equations over $\mathbb{F}_2$ with degree at most $d$ in $n$ variables, outputs all $K$ solutions to $E$ or correctly decides that a solution does not exist with high probability. For an arbitrarily small $\epsilon > 0$, the runtime of the algorithm is bounded by
$$O \left(\max \left( 2^{0.6943n}, K \cdot 2^{\epsilon n} \right) \right) \text { for } d=2,
\text{ and by }
O \left( \max \left( 2^{(1 - 1/(2d))n}, K \cdot 2^{\epsilon n}\right) \right) \text { for } d>2.$$
\end{theorem}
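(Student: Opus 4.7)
The plan is to build the enumeration algorithm on top of the single-solution algorithm of Theorem~\ref{thm:main} by iterating a find-and-exclude loop. After each successful invocation returning a solution $\hat{x}$, I would augment the system with a few low-degree polynomials (using a constant number of auxiliary variables if needed) that encode the constraint $x \ne \hat{x}$, keeping the overall degree bounded by $d$ up to a constant so that Theorem~\ref{thm:main} continues to apply. Iterating until no further solution is reported produces the complete list of $K$ solutions.

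A direct analysis of this loop gives runtime $O(K \cdot T(n))$, which is too slow. To reach the claimed $O(\max(2^{0.6943 n}, K \cdot 2^{\epsilon n}))$ bound (and its analogue for $d > 2$), I would exploit the internal structure of the algorithm of Theorem~\ref{thm:main}, which itself reduces to a multiple parity-counting subroutine whose dominant cost is a one-time preprocessing: the construction of a probabilistic polynomial together with multipoint-evaluation data over a structured domain, all at cost $T(n)$. The extraction of a single solution on top of this data is lightweight. By performing the preprocessing once and reusing it across all iterations, and by noting that each new exclusion constraint affects only a $2^{\epsilon n}$-sized portion of the preprocessed evaluation data, the incremental cost per extracted solution can be driven down to $O(2^{\epsilon n})$. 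Summing over the $K$ iterations then yields the bound $O(T(n) + K \cdot 2^{\epsilon n})$.

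The main obstacle is to formalize this amortization within the multiple parity-counting framework. Two technical points must be verified: first, that the preprocessed evaluation data supports incremental exclusion updates in $O(2^{\epsilon n})$ time rather than requiring a full rebuild; and second, that the probabilistic guarantees of Theorem~\ref{thm:main} survive reuse across $K$ dependent iterations. The latter is handled by a union bound that inflates the per-call failure probability to $O(1/K)$ at polynomial cost; the former is the main new technical step, and depends on the specific structure of the multiple parity-counting algorithm developed earlier in the paper. The \emph{``not too large''} caveat appearing in the abstract corresponds exactly to the regime in which $K \cdot 2^{\epsilon n}$ does not exceed the second term of the $\max$ bound, at which point the amortization is tight; beyond that regime the per-iteration cost would no longer be dominated by the $2^{\epsilon n}$ update, and a different (essentially brute-force) strategy becomes preferable.
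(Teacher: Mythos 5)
There is a genuine gap here. Your entire bound rests on the claim that, after a one-time preprocessing of cost $T(n)$, ``each new exclusion constraint affects only a $2^{\epsilon n}$-sized portion of the preprocessed evaluation data,'' but this is precisely the step you do not prove, and as stated it does not hold for the algorithm of Theorem~\ref{thm:main}. Appending a new polynomial $P_{m+1}$ to $E$ (to exclude a found solution) changes every probabilistic polynomial $R_i = \sum_j \alpha_{ij}P_j$ with probability $\tfrac12$, hence changes $\tilde F$, the interpolated polynomials $G^{(k)}$, and all the M\"obius-transform evaluation tables throughout the recursion; there is no locality to exploit. It is true that excluding one solution flips exactly one \emph{partial parity} $\sum_{\assign{z}}F(\assign{y},\assign{z})$, but the algorithm does not store these parities as reusable data --- it recomputes them through the probabilistic-polynomial machinery. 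Moreover, even if the parities could be updated cheaply, your picture of ``lightweight extraction of a single solution on top of this data'' is not how Theorem~\ref{thm:main} works: extracting the bits of one solution goes through the search-to-decision reduction, costing $2n$ full decision calls, i.e.\ $\Theta(n\, T(n))$ per solution, not $2^{\epsilon n}$. There is also a secondary issue with encoding $x \neq \assign{x}$ in degree $\leq d$: the natural single equation has degree $n$, and introducing auxiliary variables per exclusion accumulates up to $\Theta(K)$ extra variables over the loop, destroying the exponent.

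The paper avoids find-and-exclude entirely. It isolates many solutions \emph{in parallel}: apply a uniformly random invertible change of variables $x = Bv$, split $v$ into $(y,z)$ with $|z| = n_1$, and call a solution \emph{isolated} if no other solution shares its $y$-prefix. One call to $\mathrm{MultParityCount}$ identifies all prefixes $\assign{y}$ with odd solution count, and $n_1$ further calls (each fixing one $z_i = 0$) recover the suffix bits of every isolated solution simultaneously, since for an isolated solution exactly one of $U(\assign{y},i,0), U(\assign{y},i,1)$ equals $1$. Choosing $n_1$ (via a sampling-based estimate of $K$) so that $K\cdot 2^{n_1-n} \leq \tfrac12$ makes each solution isolated with probability $\geq \tfrac12$, and $r = 2n$ independent changes of variables catch all $K$ solutions with high probability; false positives are filtered by testing each candidate. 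The $\max(\cdot,\cdot)$ in the bound then falls out of the choice of $n_1$, not from switching to a brute-force strategy as you suggest. If you want to salvage your approach, you would have to supply the incremental-update lemma you flag as ``the main new technical step''; without it the argument does not go through.
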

When $K \geq 2^{0.6943n}$ for $d=2$ and $K \geq 2^{(1 - 1/(2d))n}$ for $d> 2$, then our algorithm is close to the best possible (assuming the goal is to explicitly output solutions, rather than some compact representation of them).

The algorithm may be useful in various scenarios. For example, suppose the system $E$ contains several polynomials of degree $d=2$ and additional ones with higher degrees. Then, one may attempt to find a solution to $E$ by enumerating all solutions to the system $E'$ which contains only the quadratic polynomials, and testing them on the remaining polynomials. Denoting the number of solutions to $E'$ by $K'$, the approach is generally preferable compared to analyzing the entire system at once if $K' \leq 2^{0.6943n}$ (this may depend on additional parameters otherwise). Moreover, by random sampling, one can obtain a sufficiently good estimate of $K'$ in order to determine the complexity in advance.

The improvement of our algorithms compared to the state-of-the-art is significant and surprising to a certain extent. To demonstrate this, we consider the work of Bj\"{o}rklund and Husfeldt~\cite{BjorklundH13} which presented an algorithm for counting the parity of the number of Hamiltonian cycles in a directed graph on $n$ vertices.\footnote{We thank Andreas Bj\"{o}rklund for pointing out the connection between this work and~\cite{BjorklundH13}.} Their algorithm related the problem of Hamiltonian cycle parity-counting to the problem of listing all solutions to a structured quadratic system over $\mathbb{F}_2$ with $n$ variables. The system is succinctly expressed as $x \circ Ax = x$, where $A$ is the adjacency matrix of the graph and the operator $\circ$ denotes the coordinate-wise product. Bj\"{o}rklund and Husfeldt bounded the number of solutions to this system by $1.5^{n}$ and devised an algorithm for listing all of them in time $O(\varphi^n)$, which gives the total runtime of the Hamiltonian cycle parity-counting algorithm. Interestingly, the asymptotic complexity of our algorithm for $d=2$ in Theorem~\ref{thm:exhaust} comes arbitrarily close to the complexity of the algorithm of~\cite{BjorklundH13} which was tailored to list solutions to equation systems of a very specific form. This also implies that any further (exponential) improvement to our algorithm would improve the algorithm for Hamiltonian cycle parity-counting as well.

\subsection{Techniques}

We continue the line of work on polynomial method-based algorithms for solving equation systems, initiated by Lokshtanov et al.~\cite{LokshtanovPTWY17} and Bj\"{o}rklund et al.~\cite{BjorklundK019}. In particular, we revisit the algorithm of Bj\"{o}rklund et al. which reduced the problem of finding a solution to a polynomial system over $\mathbb{F}_2$ to a \emph{parity-counting} problem of computing the overall parity of solutions $\sum_{\assign{x} \in \{0,1\}^n} F(\assign{x})$ for $F(x) = (1+P_1(x)) \ldots (1+P_m(x))$. By exploiting probabilistic polynomials that approximate $F(x)$, Bj\"{o}rklund et al. reduced a parity-counting instance to many smaller instances of the same problem, where each smaller instance was obtained by fixing a variable subset to a particular value.

Our main observation is that all of these smaller parity-counting instances are related and solving them independently is suboptimal. In order to exploit this observation, we define a new problem of \emph{multiple parity-counting} which solves many small parity-counting instances at once. Making use of probabilistic polynomials, we reduce an instance of multiple parity-counting to only a few instances of the same problem.

Interestingly, we further use our multiple parity-counting algorithm in a natural way as a sub-procedure in an algorithm that enumerates over all solutions to a polynomial system (Theorem~\ref{thm:exhaust}) with overhead that is essentially optimal. On the other hand, it is not obvious how to output an exponential number of solutions using the related algorithms of~\cite{BjorklundK019,LokshtanovPTWY17} without increasing their complexity proportionally.

We describe some preliminaries (including the Bj\"{o}rklund et al. algorithm~\cite{BjorklundK019}) next, while the proofs of theorems~\ref{thm:main} and~\ref{thm:exhaust} are given in sections~\ref{sec:improve} and~\ref{sec:exhaust}, respectively.

\section{Preliminaries}
\label{sec:prelim}

Given a finite set $S$, denote by $|S|$ its size. Given a vector $x \in \{0,1\}^n$, let $\mathrm{HW}(x)$ denote its Hamming weight. We denote by $W^{n}_w$ the set
$\{x \in \{0,1\}^{n} \mid \mathrm{HW}(x) \leq w\}$. Note that $|W^{n}_w| = \sum_{i=0}^{w} \tbinom{n}{i}$ and we simplify notation by denoting $\lbinom{n}{w} = \sum_{i=0}^{w} \tbinom{n}{i}$. We denote by $W^{n}_{w}[i]$ the $i$'th element of the set $W^{n}_{w}$ (according to some fixed ordering of its elements). We further denote by $[n]$ the set $\{1, \ldots, n \}$.

For $0\leq q \leq 1$, let $\HH(q) = -q \log q - (1-q) \log (1-q)$ be the binary entropy function.
We will use the following well-known property of this function.
\begin{fact}
\label{fact:entropy1}
For positive integers $u,v$ such that $v \leq \tfrac{u}{2}$,
\begin{align*}
\tfrac{1}{u+1} \cdot 2^{u \HH \left(\tfrac{v}{u} \right)} \leq \tbinom{u}{v} \leq 2^{u\HH \left(\tfrac{v}{u} \right)}.
\end{align*}
\end{fact}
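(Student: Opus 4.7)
The plan is to prove both inequalities from the binomial identity $1 = (q + (1-q))^u = \sum_{i=0}^{u}\binom{u}{i} q^{i}(1-q)^{u-i}$ evaluated at the choice $q = v/u$, which is well-defined and lies in $(0, 1/2]$ by the hypothesis $v \leq u/2$. Denote the $i$-th summand by $T_i := \binom{u}{i} q^{i}(1-q)^{u-i}$; then $\sum_{i=0}^u T_i = 1$ and every $T_i \geq 0$.

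For the upper bound, I would simply observe that $T_v \leq 1$, which after rearrangement gives
$$\binom{u}{v} \leq q^{-v}(1-q)^{-(u-v)} = \left(\tfrac{v}{u}\right)^{-v}\left(\tfrac{u-v}{u}\right)^{-(u-v)}.$$
Taking $\log_2$ of the right-hand side yields exactly $u\HH(v/u) = -u\bigl[\tfrac{v}{u}\log\tfrac{v}{u} + \tfrac{u-v}{u}\log\tfrac{u-v}{u}\bigr]$, proving $\binom{u}{v} \leq 2^{u\HH(v/u)}$.

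For the lower bound, the key step is to show that among the $u+1$ non-negative summands, $T_v$ is the largest; once that is established, $T_v \geq \tfrac{1}{u+1}$ follows from $\sum_i T_i = 1$, and rearranging $\binom{u}{v} q^{v}(1-q)^{u-v} \geq \tfrac{1}{u+1}$ gives the desired bound $\binom{u}{v} \geq \tfrac{1}{u+1}\cdot 2^{u\HH(v/u)}$. To establish the maximality of $T_v$, I would examine the ratio
$$\frac{T_{i+1}}{T_i} = \frac{u-i}{i+1}\cdot \frac{q}{1-q}$$
and check that plugging in $q = v/u$ turns the inequality $T_{i+1} \geq T_i$ into $i \leq v + \tfrac{v}{u} - 1$. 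This means $T_{i+1} \geq T_i$ for all $i \leq v-1$ and $T_{i+1} \leq T_i$ for all $i \geq v$, so the sequence increases up to index $v$ and decreases afterwards, making $T_v$ the maximum.

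The main obstacle is the lower bound, but it reduces entirely to the short monotonicity argument above; the hypothesis $v \leq u/2$ is used only to guarantee $q \leq 1/2$, which is consistent with the chosen pivot index being $v$ (by symmetry the bound for $v > u/2$ would follow from the $v \mapsto u-v$ case, but this is unnecessary here). No new ideas beyond elementary manipulation of the binomial expansion are required.
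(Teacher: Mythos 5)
Your argument is correct and is the standard textbook derivation of this estimate. The paper states Fact~\ref{fact:entropy1} without proof, treating it as a well-known property of the binary entropy function, so there is no in-paper argument to compare against; your choice to expand $(q+(1-q))^u$ with $q=v/u$, read off the upper bound from $T_v\le 1$, and obtain the lower bound by showing the $v$-th term dominates all $u+1$ nonnegative summands (so $T_v\ge\tfrac{1}{u+1}$) is exactly the usual route. The ratio computation $\tfrac{T_{i+1}}{T_i}=\tfrac{u-i}{i+1}\cdot\tfrac{q}{1-q}$ together with $q=v/u$ does give the threshold $i\le v-1+\tfrac{v}{u}$, and since $v$ is a positive integer and $0<\tfrac{v}{u}\le\tfrac12<1$, this indeed splits the integers at $i=v-1$ versus $i=v$, establishing unimodality with peak at $T_v$. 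The hypotheses $v\ge 1$ and $v\le u/2$ are used exactly where you say: to make $q\in(0,\tfrac12]$ and hence keep the ratio well-defined and the rounding of the threshold correct. No gaps.
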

\begin{definition}
\label{def:f}
Let
$f_d(p)= (1 - p) \cdot \HH\left(\tfrac{(d-1) p}{1-p} \right),$
and denote by $\tau(d)$ the maximal value of this function in the interval $\left[0,\tfrac{1}{2d-1} \right]$.
\end{definition}
We need the following fact.
\begin{fact} [Analysis of $\tau(d)$]
\label{fact:tau}
$\tau(d)$ satisfies the following properties:
\begin{enumerate}
  \item $\tau(2) = \log(\varphi) < 0.6943$, where $\varphi = \tfrac{1}{2}(1 + \sqrt{5})$ is the golden ratio.
  \item For any $d > 1$, $1 - \tfrac{1}{2d-1} \leq \tau(d) < 1 - \tfrac{1}{2d}$.
\end{enumerate}
\end{fact}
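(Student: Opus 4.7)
My plan is to prove each item by explicit optimization of $f_d$.

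For item~1, I would differentiate $f_2(p)=-p\log p-(1-2p)\log(1-2p)+(1-p)\log(1-p)$ (the expanded form of $(1-p)H(p/(1-p))$) and set the derivative to zero; the first-order condition collapses to the quadratic $(1-2p)^2=p(1-p)$, whose root in $[0,1/3]$ is $p^\ast=(1-1/\sqrt{5})/2$. Substituting back and using the critical-point relation to cancel the $\log(1-2p^\ast)$ term, $f_2(p^\ast)$ simplifies to $\tfrac{1}{2}\log((1-p^\ast)/p^\ast)$; since $(1-p^\ast)/p^\ast=(3+\sqrt{5})/2=\varphi^2$, we obtain $\tau(2)=\log\varphi$, and $\log\varphi<0.6943$ then follows numerically.

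For item~2, I would first reparameterize via $q=(d-1)p/(1-p)$, which maps $[0,1/(2d-1)]$ bijectively onto $[0,1/2]$ and rewrites $f_d(p)$ as $g(q)=(d-1)H(q)/(q+d-1)$. The lower bound follows by evaluating $g$ at the endpoint $q=1/2$: $g(1/2)=(d-1)/(d-1/2)=1-1/(2d-1)$, giving $\tau(d)\ge 1-1/(2d-1)$. For the strict upper bound I would analyze the interior critical point: setting $g'(q^\ast)=0$ and using $H'(q)=\log((1-q)/q)$ collapses the condition to $(1-q^\ast)^d=(q^\ast)^{d-1}$. Substituting back gives the clean formula $\tau(d)=-\tfrac{d-1}{d}\log q^\ast$; equivalently, $\tau(d)$ is the unique positive root of $\Psi(t):=2^{-t}+2^{-td/(d-1)}=1$ (which recovers $\tau(2)=\log\varphi$ directly from $x+x^2=1$). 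Since $\Psi$ is strictly decreasing, the bound $\tau(d)<1-1/(2d)$ reduces to $\Psi(1-1/(2d))<1$, i.e.,
$$2^{1/(2d)}+2^{-1/(2(d-1))}<2.$$

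The main obstacle is this last inequality, which is asymptotically tight as $d\to\infty$. I would prove it using the elementary bounds $2^x-1\le x(\ln 2)\,2^x$ and $1-2^{-x}>x(\ln 2)\,2^{-x}$, both valid for $x\ge 0$ and obtained by differentiating the relevant expressions. Writing $\alpha=1/(2d)$ and $\beta=1/(2(d-1))$ and rearranging the target as $2^\alpha-1<1-2^{-\beta}$, it suffices to show $\alpha\,2^\alpha\le\beta\,2^{-\beta}$, which is equivalent to $2^{\alpha+\beta}\le\beta/\alpha=d/(d-1)$. Since $\alpha+\beta=(2d-1)/(2d(d-1))\in[0,1]$ and the convexity of $2^x$ yields $2^x\le 1+x$ on $[0,1]$ (via the secant through $(0,1)$ and $(1,2)$), we obtain $2^{\alpha+\beta}\le 1+(\alpha+\beta)<1+1/(d-1)=d/(d-1)$, completing the argument.
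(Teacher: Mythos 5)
Your item 1 and the lower bound in item 2 follow the paper's own route: the paper evaluates $f_d$ at $p = 1/(2d-1)$ (equivalently, your $g$ at $q = 1/2$) for the lower bound, and sketches exactly your ``differentiate and simplify'' computation of $\tau(2)$ as its first option. Your upper bound, however, is a genuinely different argument. The paper bounds $H(q) \leq 2\sqrt{q(1-q)}$, obtains the closed-form estimate $\tau(d) \leq \sqrt{(d-1)/d}$, and finishes with $\sqrt{d} - \sqrt{d-1} > 1/(2\sqrt d)$. You instead pin down $\tau(d)$ exactly as the positive root of $\Psi(t) = 2^{-t} + 2^{-td/(d-1)} = 1$ and then verify $\Psi(1 - 1/(2d)) < 1$ by elementary estimates. (You should state explicitly that the unique interior critical point $q^*$ gives the maximum rather than the endpoint $q=1/2$; this holds since $\phi(q) := H'(q)(q+d-1) - H(q)$ has $\phi'(q) = (q+d-1)H''(q) < 0$, so $g'$ changes sign exactly once, from $+$ to $-$.) Your characterization is more informative: it immediately yields the paper's side remark that $\tau(d) = (d-1)\log r$ for $r$ the positive root of $x^d - x - 1$ (set $r = (q^*)^{-1/d}$), and it explains why $1 - 1/(2d)$ is asymptotically tight, which the $\sqrt{(d-1)/d}$ bound obscures.

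There is one small gap in the final step. Your assertion $\alpha + \beta = \tfrac{2d-1}{2d(d-1)} \in [0,1]$ is equivalent to $2d^2 - 4d + 1 \geq 0$, i.e.\ $d \geq 1 + 1/\sqrt{2} \approx 1.71$, so it fails for $d$ just above $1$ --- and in fact the intermediate reduction $\alpha 2^\alpha \leq \beta 2^{-\beta}$ is itself false there: at $d = 1.2$ one gets roughly $0.56$ on the left and $0.44$ on the right, even though the target $2^\alpha + 2^{-\beta} < 2$ does still hold. Since the statement reads ``for any $d > 1$,'' you either need a short separate argument for $1 < d < 1 + 1/\sqrt{2}$, or a remark that only integer $d \geq 2$ is relevant to the algorithm, in which case $\alpha + \beta$ is decreasing in $d$ with maximum $3/4$ at $d=2$ and the convexity step closes cleanly. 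The paper's route avoids this issue, as both $\tau(d) \leq \sqrt{(d-1)/d}$ and $\sqrt d - \sqrt{d-1} > 1/(2\sqrt d)$ hold for every real $d > 1$.
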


One can also generalize the first property and show that for all integers $d \geq 2$,
$\tau(d) = (d-1) \log r$, where $r$ is the real positive root of the polynomial $x^d - x - 1 = 0$.

\begin{proof}
The computation of $\tau(2)$ can be done by standard analysis of the function $f_2(p)$. Alternatively,
it can be deduced from Fact~\ref{fact:entropy1} applied with $u = n-i$ and $v = i$ (where $n \rightarrow \infty$),
combined with the equality
$\sum_{i=0}^{\lfloor (n-1)/2 \rfloor} \tbinom{n-i-1}{i} = F_n$, where $F_n$ is the $n$'th Fibonacci number.

The lower bound on $\tau(d)$ for arbitrary $d > 1$ is obtained by noting that $f_d(\tfrac{1}{2d-1}) = 1 - \tfrac{1}{2d-1}$.
For the upper bound, we use the inequality $\HH(q) \leq 2\sqrt{q(q-1)}$ and deduce
$f_d(p) \leq 2\sqrt{(d-1)p(1 - dp)}$.
As the maximal value of the right hand side is $\sqrt{\tfrac{d-1}{d}}$, obtained at $p = \tfrac{1}{2d}$,
we get $\tau(d) \leq \sqrt{\tfrac{d-1}{d}}$.
Using the inequality $\sqrt{d} - \sqrt{d-1} > \tfrac{1}{2\sqrt{d}}$ (for any $d > 1$),
we obtain $1 - \tau(d) \geq \tfrac{\sqrt{d} - \sqrt{d-1}}{\sqrt{d}} > \tfrac{1}{2d}$.
\end{proof}

\subsection{Boolean Algebra}
\label{subsec:boolean}

Any function $F :\mathbb{F}_2^n \rightarrow \mathbb{F}_2$ can be described as a multilinear polynomial, whose algebraic normal form (ANF) is unique and given by $F(x_1,\ldots,x_n)=\sum_{u \in \{0,1\}^n} \alpha_u(F) M_u(x)$, where $\alpha_u(F) \in \{0,1\}$ is the coefficient of the monomial $M_u(x) = \prod_{i=1}^{n} x_i^{u_i}$ (the sum and multiplication are over $\mathbb{F}_2$). We differentiate between formal (vectors of) variables and assignments to these variables using the following notation: an assignment to the formal variable vector $x$ in $F(x)$ is denoted by $\assign{x}$ and the value of $F$ on this assignment is $F(\assign{x})$.

The algebraic degree of the function $F$ is defined as $\max\{\mathrm{HW}(u) \mid \alpha_u(F) \neq 0\}$. Therefore, a function $F$ with a degree bounded by $d \leq n$ can be described using $\lbinom{n}{ d}$ coefficients.

\paragraph{Interpolation.}
Any ANF coefficient $\alpha_u(F)$ can be interpolated by summing (over $\mathbb{F}_2$) over $2^{\mathrm{HW}(u)}$ evaluations of $F$: for $u \in \{0,1\}^n$, define
$I_{u} = \{i \in [n] \mid u_i = 1 \}$ and let
$S_u = \{x \in \{0,1\}^n \mid I_{x} \subseteq I_u \}$. Then,
\begin{align}
\label{eq:inter}
\alpha_u(F) = \sum_{\assign{x} \in S_u}F(\assign{x}).
\end{align}
Indeed, among all monomials only $M_u(\assign{x})$ attains a value of 1 an odd number of times in the expression
$$\sum_{\assign{x} \in S_u}F(\assign{x}) = \sum_{\assign{x} \in S_u} \sum_{v \in \{0,1\}^n} \alpha_v(F) M_v(\assign{x}).$$
Therefore, a function $F$ of degree bounded by $d \leq n$ can be fully interpolated from its evaluations on the set $W^{n}_d$.

\begin{fact}[Symbolic interpolation]
\label{fact:inter}
Let $F :\mathbb{F}_2^n \rightarrow \mathbb{F}_2$. For some $1 \leq n_1 \leq n$, partition its $n$ variables into two sets $y_1,\ldots,y_{n-n_1},z_1,\ldots,z_{n_1}$. Given the $\mathrm{ANF}$ of $F$, factor out all the monomials that are multiplied with $z_1 \ldots z_{n_1}$, and write it as
$F(y,z) = (z_1 \ldots z_{n_1}) F_1(y) + F_2(y,z),$
where each monomial in $F_2(y,z)$ misses at least one variable from $\{ z_1,\ldots,z_{n_1} \}$.
Then,
\begin{align*}
F_1(y) = \sum_{\assign{z} \in \{0,1\}^{n_1}} F(y,\assign{z}).
\end{align*}
\end{fact}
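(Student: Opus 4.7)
The plan is to derive this fact as a consequence of the scalar interpolation formula (\ref{eq:inter}). The key observation is that for any fixed assignment $\assign{y}$ to the $y$-variables, the restricted function $G_{\assign{y}}(z) := F(\assign{y}, z)$ is a function $\mathbb{F}_2^{n_1} \to \mathbb{F}_2$, and its top ANF coefficient (that of the monomial $z_1 \cdots z_{n_1}$) is precisely $F_1(\assign{y})$. Once this is established, (\ref{eq:inter}) applied with $u = (1,\ldots,1) \in \{0,1\}^{n_1}$ (so that $S_u = \{0,1\}^{n_1}$) gives
\[
F_1(\assign{y}) \;=\; \alpha_{(1,\ldots,1)}(G_{\assign{y}}) \;=\; \sum_{\assign{z} \in \{0,1\}^{n_1}} G_{\assign{y}}(\assign{z}) \;=\; \sum_{\assign{z} \in \{0,1\}^{n_1}} F(\assign{y}, \assign{z}),
\]
and since this identity holds for every assignment $\assign{y}$, the two multilinear polynomials in $y$ coincide, which is the claim.

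The substantive step is therefore verifying the claim about the top coefficient of $G_{\assign{y}}$. I would argue as follows: start from the given decomposition $F(y,z) = (z_1 \cdots z_{n_1}) F_1(y) + F_2(y,z)$ in ANF, which is well-defined because the ANF of $F$ is unique and we simply collect the monomials that are divisible by $z_1 \cdots z_{n_1}$ into the first summand. Substituting $y = \assign{y}$ yields $G_{\assign{y}}(z) = F_1(\assign{y}) \cdot z_1 \cdots z_{n_1} + F_2(\assign{y}, z)$. By construction, every monomial of $F_2(y,z)$ misses some $z_i$, so after substituting $\assign{y}$ each surviving term is a $z$-monomial that still misses some $z_i$; consequently $F_2(\assign{y}, z)$ contributes nothing to the coefficient of $z_1 \cdots z_{n_1}$ in the unique ANF of $G_{\assign{y}}$. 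Hence that coefficient equals $F_1(\assign{y})$.

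I do not anticipate a serious obstacle here: the whole argument is bookkeeping about ANF coefficients combined with an off-the-shelf invocation of (\ref{eq:inter}). The one point that needs a line of care is the claim that, even though several monomials of $F_2(y,z)$ may collapse together after the substitution $y=\assign{y}$, none of them can produce the monomial $z_1 \cdots z_{n_1}$; this follows immediately because each such monomial already misses a $z_i$ before substitution, and substituting the $y$-variables cannot introduce new $z$-factors. With that in hand the rest of the proof is a direct two-line computation.
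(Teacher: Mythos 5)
Your proof is correct and takes essentially the same route as the paper, which simply notes that the claim follows from~(\ref{eq:inter}) by viewing $F_1(y)$ as the symbolic coefficient of the monomial $z_1\cdots z_{n_1}$; you make that "symbolic coefficient" step explicit by fixing $\assign{y}$, applying~(\ref{eq:inter}) to the restriction $G_{\assign{y}}$, and then lifting the pointwise identity back to an identity of multilinear polynomials.
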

The fact follows from~(\ref{eq:inter}) by considering the polynomial $F_1(y)$ as the symbolic coefficient of the monomial $z_1 \ldots z_{n_1}$. Note that if $F(y,z)$ is of degree $d$ then $F_1(y)$ is of degree at most $\max(d - n_1,0)$.

\paragraph{The M\"{o}bius transform.}
Given the truth table of an arbitrary function $F :\mathbb{F}_2^n \rightarrow \mathbb{F}_2$ (as a bit vector of $2^n$ entries), the ANF of $F$ can be represented as a bit vector of $2^n$ entries, corresponding to its $2^n$ coefficients.
It follows from~(\ref{eq:inter}) that the ANF representation can be computed from the truth table of $F$ via a linear transformation over $\mathbb{F}_2$. This linear transformation is known as (a specific type of) the \emph{M\"{o}bius transform} and it can be computed in $O(n \cdot 2^n)$ bit operations.
The M\"{o}bius transform over $\mathbb{F}^n_2$ coincides with its inverse which corresponds to evaluating the ANF representation of $F$ (i.e., computing its truth table).

When dealing with a function $F$ of low degree $d$, the M\"{o}bius transform allows to convert the evaluations of $F$ on the set $W^{n}_d$ to its ANF representation (and vise-verse) and it can be computed  in time $O\left(n \lbinom{n}{ d} \right)$. Even more generally, given the ANF representation of a function $F$ of degree $d$ such that its variables are partitioned into 2 sets, $(y,z) =  (y_1,\ldots,y_{n-n_1},z_1,\ldots,z_{n_1})$, we can evaluate this polynomial on the set of points $(y,z) \in W^{n-n_1}_{d_1} \times \{0,1\}^{n_1}$ (for $d_1 \geq d$) in time
$$O(n \cdot |W^{n-n_1}_{d_1} \times \{0,1\}^{n_1}|) = O \left(n \cdot 2^{n_1} \lbinom{n-n_1}{ d_1} \right).$$
This evaluation is performed by composing the M\"{o}bius transforms on $W^{n-n_1}_{d_1}$ and $\{0,1\}^{n_1}$. For more details on this transform, refer to~\cite{KaskiKW16}.

\subsection{Solving Polynomial Systems over $\mathbb{F}_2$}
\label{sec:solving}

In this paper we deal with the following problem.
\begin{definition}[Solving a polynomial system over $\mathbb{F}_2$]
The input to the problem of solving a polynomial system over $\mathbb{F}_2$ consists of a system of $m$ polynomial equations of degree $d$ in the $n$ Boolean variables $x_1,\ldots,x_{n}$, denoted by $E = \{P_j(x)\}_{j = 1}^{m}$,
where each $P_j \in \mathbb{F}_2[x_1,\ldots,x_n]$ is given by its ANF.
A vector $\assign{x} \in \{0,1\}^n$ is solution to $E$ if $P_j(\assign{x}) = 0$ for all $j \in \{1,\ldots,m\}$. The problem has three variants:
\begin{enumerate}
  \item Decision: the output is Boolean and defined to be 1 if and only if $E$ has a solution.
  \item Search: the output is any (single) solution if $E$ is solvable, and NULL otherwise.
  \item Exhaustive: the output consists of all solutions to $E$.
\end{enumerate}
\end{definition}
In this paper we only consider randomized (Monte Carlo) algorithms for these problems. We will assume that $m \leq \lbinom{n}{ d}$ (and thus is polynomial in $n$). Note that if $m > \lbinom{n}{ d}$ then $E$ must contain linearly dependent equations that can be removed by Gaussian elimination (or if the equations are inconsistent, the system is unsolvable).

As noted by Bj\"{o}rklund et al. in~\cite{BjorklundK019}, the search variant reduces to the decisional variant: assuming the system has a solution, we first solve the decisional problem with $\assign{x}_1 = 0$ and with $\assign{x}_1 = 1$ and fix $x_1$ to a value for which a solution exists (with sufficiently high probability). Iteratively fixing all the variables gives a solution after at most $2n$ calls to the decision algorithm.

\subsection{Probabilistic Polynomials}

Given $m$ polynomial equations of degree $d$ in the $n$ Boolean variables $x_1,\ldots,x_{n}$, $E = \{P_j(x)\}_{j =1}^{m}$, consider the polynomial
\begin{align}
\label{eq:mainpoly}
F(x) = (1+P_1(x))(1+P_2(x)) \ldots (1+P_m(x)).
\end{align}
Note that $\assign{x}$ is a solution to $E$ if and only if $F(\assign{x}) = 1$. However, the degree of $F(x)$ is $d \cdot m$ in general, and its ANF may be too large to manipulate.

A key idea in the algorithm of Lokshtanov et al.~\cite{LokshtanovPTWY17}, and then in the followup work of Bj\"{o}rklund et al.~\cite{BjorklundK019} is the use of probabilistic polynomials that approximate $F(x)$ and have a smaller degree.
In particular, these works use the following construction (generally credited to Razborov~\cite{Razborov87} and Smolensky~\cite{Smolensky87}).
Let $\ell < m$ be a parameter. For $i \in \{1,\ldots,\ell\}, j \in \{1,\ldots,m\}$, pick $\alpha_{ij} \in \{0,1\}$ uniformly at random and define $\ell$ degree $d$ polynomials as
\begin{align*}
R_i(x) = \sum_{j =1}^{m} \alpha_{ij} P_j(x).
\end{align*}
For any $\assign{x} \in \{0,1\}^n$, if $F(\assign{x}) = 1$, then $R_i(\assign{x}) = 0$, whereas if $F(\assign{x}) = 0$, then there exists $j$ such that $P_j(\assign{x}) = 1$ and therefore $\Pr[R_i(\assign{x}) = 1] = \frac{1}{2}$ (the probability is over $\{\alpha_{ij}\}_{j = 1}^{m}$).
Let
\begin{align}
\label{eq:probPoly}
\tilde{F}(x) = (1+R_1(x))(1+R_2(x)) \ldots (1+R_\ell(x)).
\end{align}
By the above property, we get the following fact.
\begin{fact}
\label{fact:prob}
Let $F(\assign{x})$ and $\tilde{F}(\assign{x})$ be defined as in~(\ref{eq:mainpoly}) and~(\ref{eq:probPoly}), respectively.
For any $\assign{x} \in \{0,1\}^n$, if $F(\assign{x}) = 1$ (equivalently, $\assign{x}$ is a solution to $E$) then $\tilde{F}(\assign{x}) = 1$, whereas if $F(\assign{x}) = 0$ (equivalently, $\assign{x}$ is not a solution to $E$) then
\begin{align*}
\Pr[\tilde{F}(\assign{x}) = 0] = 1 - 2^{-\ell}.
\end{align*}
\end{fact}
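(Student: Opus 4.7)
The plan is to verify the two cases of the fact directly from the definitions.

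\textbf{Case $F(\assign{x}) = 1$.} First I would unpack: by the definition~(\ref{eq:mainpoly}), $F(\assign{x}) = 1$ forces $1 + P_j(\assign{x}) = 1$ for every $j$, i.e., $P_j(\assign{x}) = 0$ for all $j \in \{1,\ldots,m\}$. Then each $R_i(\assign{x}) = \sum_{j=1}^{m} \alpha_{ij} P_j(\assign{x}) = 0$ regardless of the random choices, so $\tilde{F}(\assign{x}) = \prod_{i=1}^{\ell}(1 + 0) = 1$ deterministically.

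\textbf{Case $F(\assign{x}) = 0$.} Here I would fix some $j^\ast$ with $P_{j^\ast}(\assign{x}) = 1$ (which must exist). For each $i$, condition on the values of $\alpha_{ij}$ for $j \neq j^\ast$; then $R_i(\assign{x})$ equals a constant (in $\mathbb{F}_2$) plus $\alpha_{ij^\ast}$, and since $\alpha_{ij^\ast}$ is uniform over $\{0,1\}$, so is $R_i(\assign{x})$. Hence $\Pr[R_i(\assign{x}) = 0] = \tfrac{1}{2}$. Because distinct rows use disjoint sets of random bits $\{\alpha_{ij}\}_{j=1}^{m}$, the events $\{R_i(\assign{x}) = 0\}_{i=1}^{\ell}$ are mutually independent.

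Finally, I would observe that $\tilde{F}(\assign{x}) = \prod_{i=1}^{\ell}(1 + R_i(\assign{x}))$ equals $1$ iff $R_i(\assign{x}) = 0$ for every $i$, so
\[
\Pr[\tilde{F}(\assign{x}) = 0] \;=\; 1 - \Pr[\forall i:\, R_i(\assign{x}) = 0] \;=\; 1 - \prod_{i=1}^{\ell} \tfrac{1}{2} \;=\; 1 - 2^{-\ell},
\]
which is the desired bound. There is no real obstacle here: the only subtlety is making sure to identify a fixed witness $j^\ast$ before computing the conditional distribution of $R_i(\assign{x})$, and to invoke independence across rows $i$ at the end.
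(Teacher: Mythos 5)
Your proof is correct and follows essentially the same route the paper takes (the paper states the key observations -- that $R_i(\assign{x})=0$ deterministically on solutions and is uniform on non-solutions -- in the paragraph preceding the fact and leaves the rest implicit). You merely spell out the conditioning on a witness $j^\ast$ and the independence across the $\ell$ rows, which is exactly the intended argument.
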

Note that the degree of $\tilde{F}(x)$ is at most $d \cdot \ell$, which may be much lower than the degree of $F(x)$.

Next, we describe how such polynomials are used in the algorithm of Bj\"{o}rklund et al.

\subsection{The Bj\"{o}rklund et al. Algorithm for Solving Polynomial Systems~\cite{BjorklundK019}}
\label{sec:reduction}

The Bj\"{o}rklund et al. algorithm is based on a reduction to the parity-counting problem, as defined below.

\begin{definition}[Parity-counting problem]
The input to the parity-counting problem consists of a system of $m$ polynomial equations of degree $d$ in the $n$ Boolean variables $x_1,\ldots,x_{n}$, denoted by $E = \{P_j(x)\}_{j = 1}^{m},$
where each $P_j \in \mathbb{F}_2[x_1,\ldots,x_n]$ is given by its ANF.
Let $F(x) = (1+P_1(x)) \ldots (1+P_m(x))$.
The output is the overall parity of solutions $\sum_{\assign{x} \in \{0,1\}^n} F(\assign{x})$.
\end{definition}

\subsubsection{Reduction from decisional polynomial system solving to parity-counting.}

The algorithm of~\cite{Smolensky87} uses the Valiant-Vazirani affine hashing~\cite{ValiantV86} in order to reduce the decisional problem of solving a polynomial system to several calls (whose number is polynomial in $n$) to an algorithm for the parity-counting problem.
We briefly sketch this reduction below.

Obviously, if $\sum_{\assign{x} \in \{0,1\}^n} F(\assign{x}) = 1$, then the system represented by $E$ has a solution, but the opposite direction does not hold in general. The main idea (borrowed from~\cite{ValiantV86}) is to add several random affine equations to the system with the goal of isolating some solution $\assign{x}$ (namely, $\assign{x}$ will be the only solution to the extended system), ensuring that the value of the parity-counting problem on this instance is 1. The number of equations that we need to add in order to guarantee success with high probability depends on the (base 2) logarithm of the number of solutions to $E$, denoted by $k$, which is generally unknown. Yet, we can exhaust all $n+1$ possibilities of $k = 0,1,\ldots,n$.

\subsubsection{The Bj\"{o}rklund et al. parity-counting algorithm.}

We summarize the Bj\"{o}rklund et al. parity-counting algorithm. For more details and analysis, refer to~\cite{BjorklundK019}.

Define the probabilistic polynomial $\tilde{F}$ as in~(\ref{eq:probPoly}).
To exploit its properties, partition the $n$ variables into 2 sets $y = y_1,\ldots,y_{n-n_1}$ and $z = z_1,\ldots z_{n_1}$, where $n_1 < n$ is a parameter.
Let $G(y) = \sum_{\assign{z} \in \{0,1\}^{n_1}} \tilde{F}(y,\assign{z})$. Writing $\tilde{F}(y,z) = (z_1 \ldots z_{n_1}) \cdot \tilde{F}_1(y) + \tilde{F}_2(y,z)$, by Fact~\ref{fact:inter}, $G(y) = \tilde{F}_1(y)$ and its degree is at most $d \cdot \ell - n_1$.\footnote{In the previous algorithm of Lokshtanov et al.~\cite{LokshtanovPTWY17}, a similar polynomial to $G$ was defined, but it had a higher degree which resulted in a less efficient algorithm.} Then, interpolate $G(y)$ (as described at the end of this section) and evaluate it on all $\assign{y} \in \{0,1\}^{n - n_1}$.
For each such $\assign{y}$, by Fact~\ref{fact:prob} and a union bound over all $\assign{z} \in \{0,1\}^{n_1}$,
\begin{align*}
\Pr \left[ G(\assign{y}) = \sum_{\assign{z} \in \{0,1\}^{n_1}} F(\assign{y},\assign{z}) \right] =
\Pr \left[\sum_{\assign{z} \in \{0,1\}^{n_1}} \tilde{F}(\assign{y},\assign{z}) = \sum_{\assign{z} \in \{0,1\}^{n_1}} F(\assign{y},\assign{z}) \right]  \geq
1 - 2^{n_1 - \ell}.
\end{align*}
Choose $\ell = n_1 + 2$, so the computed \emph{partial parity} $\sum_{\assign{z} \in \{0,1\}^{n_1}} \tilde{F}(\assign{y},\assign{z})$ is correct with probability at least $\frac{3}{4}$. For each $\assign{y} \in \{0,1\}^{n - n_1}$ the error is reduced similarly to~\cite{LokshtanovPTWY17}: compute $t = \Theta(n)$ independent probabilistic polynomials $\{G^{(k)}(y)\}_{k=1}^{t}$ to obtain $t$ approximations of each partial parity and maintain a scoreboard of ``votes'' for it. Then, take a majority vote for each $\assign{y} \in \{0,1\}^{n - n_1}$ across all $t$ approximations to obtain the true partial parity, except with exponentially small probability.

Assuming all true partial parities $\sum_{\assign{z} \in \{0,1\}^{n_1}} F(\assign{y},\assign{z})$ are correctly computed, output the total parity
$$\sum_{\assign{x} \in \{0,1\}^n} F(\assign{x}) = \sum_{\assign{y} \in \{0,1\}^{n - n_1}} \sum_{\assign{z} \in \{0,1\}^{n_1}} F(\assign{y},\assign{z}).$$

\paragraph{Interpolating $G(y)$.}
We have
\begin{align*}
G(\assign{y}) = \sum_{\assign{z} \in \{0,1\}^{n_1}} \tilde{F}(\assign{y},\assign{z}) = \\
\sum_{\assign{z} \in \{0,1\}^{n_1}} (1+R_1(\assign{y},\assign{z})) \ldots (1+R_\ell(\assign{y},\assign{z})) =
\sum_{\assign{z} \in \{0,1\}^{n_1}} (1+R_{1|\assign{y}}(\assign{z})) \ldots (1+R_{\ell|\assign{y}}(\assign{z})),
\end{align*}
where $R_{i|\assign{y}}(\assign{z}) = R_{i}(\assign{y},\assign{z})$. Therefore, each evaluation $G(\assign{y})$ reduces to solving a parity-counting instance for the system $\{R_{i|\assign{y}}(z)\}_{j =1}^{\ell}$, which has $\ell$ equations of degree $d$ over $n_1$ variables. Since its degree is $d \cdot \ell - n_1$, $G(y)$ can be interpolated from its evaluations on $\{ \assign{y} \in W^{n-n_1}_{d \cdot \ell - n_1} \}$. Overall, interpolating $G(y)$ requires $\lbinom{n - n_1}{d \cdot \ell - n_1}$ recursive calls for solving (smaller) parity-counting instances.

\section{Improved Algorithm for Solving Polynomial Equation Systems over $\mathbb{F}_2$}
\label{sec:improve}

In this section we prove the following stronger variant of Theorem~\ref{thm:main} which gives a tighter bound on the runtime in terms of $\tau(d)$ (recalling Definition~\ref{def:f}).
\begin{theorem}
\label{thm:main2}
There is a randomized algorithm that given a system $E$ of polynomial equations over $\mathbb{F}_2$ with degree at most $d$ in $n$ variables, finds a solution to $E$ or correctly decides that a solution does not exist with high probability.
For an arbitrarily small $\epsilon > 0$, the runtime of the algorithm is bounded by $O(2^{(\tau(d) + \epsilon) n})$.
\end{theorem}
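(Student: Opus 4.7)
The plan is to follow the overall template of Bj\"{o}rklund et al.~\cite{BjorklundK019}: reduce search to decision by fixing variables one-by-one, and reduce decision to parity-counting via Valiant--Vazirani affine hashing (Section~\ref{sec:reduction}). Both reductions incur only a polynomial blow-up, so it suffices to exhibit a randomized parity-counting algorithm running in time $O(2^{(\tau(d)+\epsilon)n})$. The improvement over~\cite{BjorklundK019} will come from treating the many smaller parity-counting subinstances that arise inside their reduction as a single batched problem, which I will call \emph{multiple parity-counting} (MPC), and solving that problem more cheaply than solving each subinstance independently.

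Formally, an MPC instance specifies degree-$d$ polynomials $P_1,\ldots,P_m$ in variables partitioned as $(y,z)$ with $|z|=n_z$, together with a query set $Y\subseteq\{0,1\}^{|y|}$, and asks for $G(\hat y)=\sum_{\hat z\in\{0,1\}^{n_z}} F(\hat y,\hat z)$ for every $\hat y\in Y$, where $F=\prod_j(1+P_j)$. Parity-counting on $n$ variables reduces to one MPC instance as in~\cite{BjorklundK019}: pick $n_1$, build a probabilistic polynomial $\tilde{F}$ via the construction~(\ref{eq:probPoly}) with $\ell=n_1+2$, split the variables as $(y,z)$ with $|z|=n_1$, and observe via Fact~\ref{fact:inter} that the partial-parity polynomial $\sum_{\hat z}\tilde{F}(y,\hat z)$ has degree at most $d\ell-n_1=(d-1)n_1+2d$, so it is determined by its evaluations on $W^{n-n_1}_{(d-1)n_1+2d}$. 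Computing exactly those evaluations is an MPC instance with $n_z=n_1$ and $|Y|=\lbinom{n-n_1}{(d-1)n_1+2d}$.

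The central step is an MPC algorithm whose overall cost is essentially $|Y|$ (up to factors absorbable into $2^{\epsilon n}$) rather than $|Y|\cdot 2^{n_z}$. I would proceed recursively: inside an MPC instance, further split $z$ into $(z',z'')$, build a fresh probabilistic polynomial to control the degree in $z''$, and reduce to a small number of MPC subinstances with strictly smaller sum-variable count. The crucial amortization is that every query set arising in the recursion is a low-Hamming-weight set, so their joint evaluations can be computed via the product-set M\"{o}bius transform described at the end of Section~\ref{subsec:boolean}, sharing work across queries rather than paying $2^{n_z}$ per query as in~\cite{BjorklundK019}. Plugging the resulting MPC bound into the outer reduction gives total cost $\lbinom{n-n_1}{(d-1)n_1}\cdot\mathrm{poly}(n)$, which by Fact~\ref{fact:entropy1} is $2^{n f_d(n_1/n)+o(n)}$; choosing $n_1=p^* n$ with $p^*$ attaining the maximum in Definition~\ref{def:f} and absorbing polynomial factors into $2^{\epsilon n}$ yields the claimed $O(2^{(\tau(d)+\epsilon)n})$ bound.

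The main obstacle will be calibrating the MPC recursion so that the query count is controlled at every level. At each level the degree parameter $\ell$ must be chosen so that (i) the partial polynomial produced by symbolic interpolation has degree small enough that the new interpolation set remains low-weight, and (ii) union-bounded correctness over all queries at all levels still holds. The split ratios must satisfy $(d-1)p/(1-p)\le 1/2$, which is precisely why Definition~\ref{def:f} optimizes $f_d$ over $p\in[0,1/(2d-1)]$; outside this range the entropy bound in Fact~\ref{fact:entropy1} is useless. Correctness is handled by the standard trick from~\cite{LokshtanovPTWY17,BjorklundK019}: run $t=\Theta(n)$ independent copies at every level and take a per-query majority vote, so that a union bound over all $\leq 2^n$ queries produced along the recursion gives overall success probability $1-2^{-\Omega(n)}$.
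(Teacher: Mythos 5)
Your overall strategy is the paper's: the same reduction chain (search to decision to parity-counting), the same multiple parity-counting abstraction, the same recursive structure (fresh probabilistic polynomial and a further split of the sum variables at each level), the same M\"{o}bius-transform amortization that turns the many subinstances into a single batched recursive call, and the same $t=\Theta(n)$ majority-vote error correction. The gap is in the final cost accounting and parameter selection, which as written does not deliver the claimed exponent. Your total-cost formula $\lbinom{n-n_1}{(d-1)n_1}\cdot\mathrm{poly}(n)$ omits the cost of the top level itself: after interpolating the partial-parity polynomial $G$, you must still evaluate each of the $t$ copies of $G$ on all of $\{0,1\}^{n-n_1}$ and take a per-point majority before summing, which costs $\Omega(2^{n-n_1})$. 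You cannot shortcut this by summing a single $G$ symbolically, because $G^{(k)}(\assign{y})$ equals the true partial parity only with probability $3/4$ per point, so the majority vote must be taken separately for each $\assign{y}$. Hence the runtime is at least $\max\bigl(2^{(1-n_1/n)n},\ \max_i 2^{f_d(\rho_i)n}\bigr)$ up to $2^{O(\epsilon)n}$ factors, where the $\rho_i$ are the sum-variable fractions at the recursive levels.

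Consequently, the prescription ``choose $n_1=p^*n$ with $p^*$ the argmax of $f_d$'' is the wrong one (and note that if the cost really were only $2^{f_d(n_1/n)n}$, you would want to \emph{minimize} $f_d$, not maximize it). For $d=2$ the argmax is $p^*=(5-\sqrt{5})/10\approx 0.2764$, so the omitted root term becomes $2^{(1-p^*)n}\approx 2^{0.7236n}$, strictly worse than $2^{\tau(2)n}=O(2^{0.6943n})$. The correct choice is $n_1=\lfloor(1-\tau(d))n\rfloor$, which lies in $[0,\tfrac{1}{2d-1}]$ by the lower bound in Fact~\ref{fact:tau}, so the constraint you correctly identify is respected. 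This makes the root term $2^{\tau(d)n}$, while the recursive terms are bounded by $2^{\tau(d)n}$ not because the top-level split hits the argmax, but because the sum-variable fraction decreases by $\lambda=\epsilon$ per level, so every level's fraction stays in $[0,\tfrac{1}{2d-1}]$, where $f_d\leq\tau(d)$ by Definition~\ref{def:f}. In other words, $\tau(d)$ enters the analysis as a uniform upper bound over the whole sweep of recursion levels (one of which may come close to attaining it), not as a target value for the top-level parameter. With that correction, the rest of your argument matches the paper's proof.
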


\begin{proof}[of Theorem~\ref{thm:main}]
By Theorem~\ref{thm:main2} for $d=2$, the complexity of the algorithm is $O(2^{(\tau(2) + \epsilon) n})$.
Using Fact~\ref{fact:tau}, we bound the complexity by $O(\varphi^{(1+\epsilon')n}) = O(2^{0.6943n})$ for sufficiently small $\epsilon'$.

For $d > 2$, by Fact~\ref{fact:tau}, $\tau(d) < 1 - \tfrac{1}{2d}$. Therefore, we can bound the complexity by
$O \left(2^{(1 - 1/(2d))n} \right)$
(since the inequality $\tau(d) < 1 - \tfrac{1}{2d}$ is strict for any $d$, we eliminate the addition of $\epsilon$ in the exponent).
\end{proof}

In the following we describe the algorithm of Theorem~\ref{thm:main2}, and bound its complexity by $O^{*}(2^{(\tau(d) + \epsilon) n})$ for an arbitrarily small $\epsilon > 0$, where the $O^{*}$ notation suppresses polynomial factors in $n$. This is the same asymptotic bound claimed in Theorem~\ref{thm:main2} (as $\epsilon > 0$ is arbitrarily small). By the reductions outlined in sections~\ref{sec:solving} and~\ref{sec:reduction}, a parity-counting algorithm gives rise to an algorithm for finding a solution to polynomial systems of degree $d$ over $\mathbb{F}_2$ (with a multiplicative polynomial overhead). Hence, we proceed to describe a parity-counting algorithm with complexity $O^{*}(2^{(\tau(d) + \epsilon) n})$. This algorithm is based on solving a somewhat more involved problem of multiple parity-counting, defined below.

\subsection{The Multiple Parity-Counting Problem}

\begin{definition}[Multiple parity-counting problem]
The input to the multiple parity-counting problem consists of a system of $m$ polynomial equations of degree $d$ in the $n$ Boolean variables $x_1,\ldots,x_{n}$, along with non-negative integers $n_1 \leq n$ and $w \leq n - n_1$.
The $n$ variables are partitioned into two sets according to $n_1$ and denoted as $y_1,\ldots,y_{n - n_1},z_{1},\ldots,z_{n_1}$, while the system is denoted by $E = \{P_j(y,z)\}_{j = 1}^{m},$
where each $P_j(y,z) \in \mathbb{F}_2[y_1,\ldots,y_{n - n_1},z_{1},\ldots,z_{n_1}]$ is given by its ANF.
Let
$F(y,z) = (1+P_1(y,z))(1+P_2(y,z)) \ldots (1+P_m(y,z)).$
The output is a vector of parities $V \in \{0,1\}^{\lbinom{n-n_1}{w}}$ such that
$$V[i] = \sum_{\assign{z} \in \{0,1\}^{n_1}} F(W^{n-n_1}_{w}[i],\assign{z}).$$
\end{definition}

We will devise an algorithm for this problem and refer to it as $\mathrm{MultParityCount}(\{P_j(y,z)\}_{j =1}^{m},n_1,w)$.
In Algorithm~\ref{alg:parity} we solve the parity-counting problem using our algorithm for the multiple parity-counting problem.

\subsubsection{Details of the multiple parity-counting algorithm.}

We describe our algorithm for the multiple parity-counting problem and give its pseudo-code in Algorithm~\ref{alg:mp}.

The algorithm begins in a similar way to the previous related algorithms~\cite{BjorklundK019,LokshtanovPTWY17}
by choosing a parameter $\ell$ and defining the probabilistic polynomial
$\tilde{F}(y,z) = (1+R_1(y,z)) \ldots (1+R_\ell(y,z))$
as in~(\ref{eq:probPoly}). Yet, we work with an additional partition of the variables.

We continue in a similar manner to the Bj\"{o}rklund et al. algorithm by partitioning the $n_1$ variables $z_1,\ldots,z_{n_1}$ into 2 sets.
Let $n_2 < n_1$ be a parameter. Let $u = u_1,\ldots,u_{n_1 - n_2}$ and $v = v_1,\ldots v_{n_2}$. Define $$G(y,u) = \sum_{\assign{v} \in \{0,1\}^{n_2}} \tilde{F}(y,u,\assign{v}).$$
Fix any $\assign{y} \in \{0,1\}^{n-n_1}$ and $\assign{u} \in \{0,1\}^{n_1 - n_2}$.
By Fact~\ref{fact:prob} and a union bound over all $\assign{v} \in \{0,1\}^{n_2}$,
\begin{align}
\label{eq:approx}
\Pr \left[G(\assign{y},\assign{u}) = \sum_{\assign{v} \in \{0,1\}^{n_2}} F(\assign{y},\assign{u},\assign{v}) \right] \geq 1 - 2^{n_2 - \ell}.
\end{align}
We choose $\ell = n_2 + 2$ as before, so each \emph{partial parity} $G(\assign{y},\assign{u}) = \sum_{\assign{v} \in \{0,1\}^{n_2}} \tilde{F}(\assign{y},\assign{u},\assign{v})$ is correct with probability at least $\frac{3}{4}$.

Writing $\tilde{F}(y,u,v) = (v_1 \ldots v_{n_2}) \cdot \tilde{F}_1(y,u) + \tilde{F}_2(y,u,v)$, by Fact~\ref{fact:inter}, $G(y,u) = \tilde{F}_1(y,u)$ and its degree is upper bounded by $d \cdot \ell - n_2$. Therefore, in order to interpolate $G(y,u)$, it is sufficient to compute its values on the set $W^{n - n_2}_{d \cdot \ell - n_2}$. Thus, for each $(\assign{y},\assign{u}) \in W^{n - n_2}_{d \cdot \ell - n_2}$, we compute $G(\assign{y},\assign{u}) = \sum_{\assign{v} \in \{0,1\}^{n_2}} \tilde{F}(\assign{y},\assign{u},\assign{v})$ and use these values to interpolate $G(y,u)$.

\paragraph{Interpolating $G(y,u)$.} The main difference from the Bj\"{o}rklund et al. parity-counting algorithm is in the way that the $|W^{n - n_2}_{d \cdot \ell - n_2}|$ evaluations of $G(y,u)$ are computed. In~\cite{BjorklundK019}, $(y,u)$ was treated as a single vector of variables $y'$ and each evaluation $G(\assign{y}') = \sum_{\assign{v} \in \{0,1\}^{n_2}} \tilde{F}(\assign{y}',\assign{v})$ was computed by a separate recursive call to the parity-counting algorithm.

On the other hand, observe that the computation of all $|W^{n - n_2}_{d \cdot \ell - n_2}| = \lbinom{n - n_2}{d \cdot \ell - n_2}$ parity-counting instances (per probabilistic polynomial)
$$\sum_{\assign{v} \in \{0,1\}^{n_2}} \tilde{F}(\assign{y},\assign{u},\assign{v}) = \sum_{\assign{v} \in \{0,1\}^{n_2}} (1+R_1((\assign{y},\assign{u}),\assign{v})) \ldots (1+R_\ell((\assign{y},\assign{u}),\assign{v}))$$
for $(\assign{y},\assign{u}) \in W^{n - n_2}_{d \cdot \ell - n_2}$ reduce to a single recursive call of the multiple parity-counting algorithm
$$\mathrm{MultParityCount}(\{R_i((y,u),v)\}_{i =1}^{\ell},n_2, d \cdot \ell - n_2).$$
We use the vector of evaluations returned from this recursive call to interpolate $G(y,u)$.

\begin{remark}
$G(y,u)$ is interpolated using its evaluations on the set $W^{n - n_2}_{d \cdot \ell - n_2} = W^{n - n_2}_{n_2 (d - 1) + 2d}$ (as $\ell = n_2 + 2$) via a call to $\mathrm{MultParityCount}$ with parameters $(n_2,n_2 (d - 1) + 2d)$, which itself calls $\mathrm{MultParityCount}$ with parameters $(n_2',n'_2 (d - 1) + 2d)$ for some $n'_2 < n_2$. Thus, the number of variables over which the polynomials are defined increases with the recursion depth, but their degree decreases (Since $d-1 \geq 1$). Our choice of parameters will ensure that $\lbinom{n-n_2}{n_2 (d - 1) + 2d} > \lbinom{n-n'_2}{n'_2 (d - 1) + 2d}$, so the new instance is not harder than the original one.
\end{remark}

\paragraph{Finalizing the algorithm.}
After interpolating $G(y,u)$, we evaluate it on all $\assign{y} \in W^{n-n_1}_{w}$ and $\assign{u} \in \{0,1\}^{n_1 - n_2}$, and obtain $\lbinom{n-n_1}{ w} \cdot 2^{n_1 - n_2}$ evaluations.

Recall that our goal is to return the true parities $\sum_{\assign{z} \in \{0,1\}^{n_1}} F(\assign{y},\assign{z})$ for each $\assign{y} \in W^{n-n_1}_{w}$.
As noted above, we choose $\ell = n_2 + 2$ and~(\ref{eq:approx}) implies that for every $(\assign{y},\assign{u})$ we have
\begin{align}
\label{eq:approx1}
\Pr \left[G(\assign{y},\assign{u}) = \sum_{\assign{v} \in \{0,1\}^{n_2}} F(\assign{y},\assign{u},\assign{v}) \right] \geq \tfrac{3}{4}.
\end{align}
Namely, we obtain the correct partial parity with probability at least $\tfrac{3}{4}$.

This allows to perform error correction using scoreboards similarly to~\cite{BjorklundK019}.
Specifically, for a parameter $t$, we compute probabilistic polynomials $\{G^{(k)}(y,u)\}_{k=1}^{t}$ and obtain $t$ approximations per $(\assign{y},\assign{u})$. We then perform a majority vote across all $t$ approximations to obtain the true partial parity $\sum_{\assign{v} \in \{0,1\}^{n_2}} F(\assign{y},\assign{u},\assign{v})$ for each $(\assign{y},\assign{u}) \in W^{n-n_1}_{w} \times \{0,1\}^{n_1 - n_2}$ (except with exponentially small probability).

Assuming we obtain the true partial parities, we can compute the required output vector of parities, as for each $\assign{y} \in W^{n-n_1}_{w}$,
$$\sum_{\assign{z} \in \{0,1\}^{n_1}} F(\assign{y},\assign{z}) = \sum_{\assign{u} \in \{0,1\}^{n_1 - n_2}} \sum_{\assign{v} \in \{0,1\}^{n_2}} F(\assign{y},\assign{u},\assign{v}).$$

\begin{algorithm}[tb]
\caption{$\mathrm{ParityCount}(\{P_j(x)\}_{j =1}^{m})$}
\label{alg:parity}
\fullhline
\begin{algorithmic}
\STATE{Parameter: $\kappa_0$}

\STATE{Initialization: $n_1 \leftarrow \lfloor \kappa_0 n \rfloor$}

\end{algorithmic}
\begin{algorithmic}[1]
\STATE{$V[0 \ldots 2^{n-n_1}-1] \leftarrow \mathrm{MultParityCount}( \{P_j(y,z)\}_{j =1}^{m}, n_1, n-n_1)$}
\STATE{$Parity \leftarrow 0$}
\FORALL{$\assign{y} \in \{0,1\}^{n-n_1}$}
   \STATE{$Parity \leftarrow Parity + V[\assign{y}]$ \com{sum is over $\mathbb{F}_2$}}
\ENDFOR
\RETURN{$Parity$}
\end{algorithmic}
\fullhline
\end{algorithm}

\begin{algorithm}[tb]
\caption{$\mathrm{MultParityCount}(\{P_j(y,z)\}_{j = 1}^{m},n_1,w)$}
\label{alg:mp}
\fullhline
\begin{algorithmic}
\STATE{Parameter: $\lambda$}
\STATE{Initialization: $n_2 \leftarrow \lfloor n_1 - \lambda n \rfloor , \ell \leftarrow n_2 + 2$, $t \leftarrow 48n +1$}

\end{algorithmic}

\begin{algorithmic}[1]

\STATE{$V[1 \ldots |W^{n-n_1}_{w}|] \leftarrow \vec{0}$
\com{initialize result array}}

\IF{$n_2 \leq 0$}
    \STATE{$V[1 \ldots |W^{n-n_1}_{w}|] \leftarrow \mathrm{BruteForceMultParity}(\{P_j(y,z)\}_{j = 1}^{m},n_1,w)$}
    \RETURN{$V$}
\ENDIF

\STATE{$SB[1 \ldots |W^{n-n_1}_{w}| \cdot 2^{n_1-n_2} ] \leftarrow \vec{0}$
\com{initialize scoreboards}}

\FORALL{$k \in \{1, \ldots, t \}$}

    \STATE{Pick $[\alpha]_{ij}^{(k)}\in \mathbb{F}_2^{\ell \times m}$ uniformly at random and compute
    $\{R_i^{(k)}(y,z)\}_{i = 1}^{\ell} = \{\sum_{j =1}^{m} \alpha_{ij}^{(k)} P_j(y,z)\}_{i = 1}^{\ell}$}

    \STATE{$V_1^{(k)}[1 \ldots |W^{n-n_2}_{d \cdot \ell - n_2}|] \leftarrow$ $\mathrm{MultParityCount}(\{R_i^{(k)}((y,u),v)\}_{i = 1}^{\ell},n_2, d \cdot \ell - n_2)$}

    \STATE{Interpolate $G^{(k)}(y,u)$: apply $\mathrm{M\ddot{o}bius}$ transform to $V_1^{(k)}[1 \ldots |W^{n-n_2}_{d \cdot \ell - n_2}|]$}

    \STATE{Evaluate $G^{(k)}(y,u)$ on $W^{n-n_1}_{w} \times \{0,1\}^{n_1 - n_2}$ by $\mathrm{M\ddot{o}bius}$ transform and store result in $Evals^{(k)}[1 \ldots |W^{n-n_1}_{w}| \cdot 2^{n_1-n_2}]$}

    \STATE{Update scoreboards $SB[1 \ldots |W^{n-n_1}_{w}| \cdot 2^{n_1-n_2} ]$ with $Evals^{(k)}$}

\ENDFOR

\FORALL{$i \in \{1, \ldots ,|W^{n-n_1}_{w}| \}$}

    \FORALL{$\assign{u} \in \{0,1\}^{n_1-n_2}$}

         \STATE{$vote \leftarrow \text{Majority}(SB[W^{n-n_1}_{w}[i],\assign{u}])$}

         \STATE{$V[i] \leftarrow V[i] + vote$ \com{sum is over $\mathbb{F}_2$}}

    \ENDFOR
\ENDFOR

\RETURN{$V$}

\end{algorithmic}
\fullhline
\end{algorithm}

\begin{algorithm}[tb]
\caption{$\mathrm{BruteForceMultParity}(\{P_j(y,z)\}_{j = 1}^{m},n_1,w)$}
\label{alg:bf}
\fullhline
\begin{algorithmic}[1]

\STATE{$Evals[1 \ldots |W^{n-n_1}_{w}| \cdot 2^{n_1}] \leftarrow \vec{1}$ \com{initialize evaluation array}}

\FORALL{$j \in \{1,\ldots, m\}$}

    \STATE{Evaluate $P_j(y,z)$ on $W^{n-n_1}_{w} \times \{0,1\}^{n_1}$ by $\mathrm{M\ddot{o}bius}$ transform and store result in $PolyEvals^{(j)}[1,\ldots,|W^{n-n_1}_{w}| \cdot 2^{n_1}]$}

    \STATE{$Evals[1 \ldots |W^{n-n_1}_{w}| \cdot 2^{n_1} ] \leftarrow$ \\ $Evals[1 \ldots |W^{n-n_1}_{w}| \cdot 2^{n_1}] \wedge PolyEvals^{(j)}[1 \ldots |W^{n-n_1}_{w}| \cdot 2^{n_1}]$ \com{bitwise AND the evaluations}}

\ENDFOR

\STATE{$V[1 \ldots |W^{n-n_1}_{w}|] \leftarrow \vec{0}$ \com{initialize result array}}

\FORALL{$i \in \{1, \ldots ,|W^{n-n_1}_{w}| \}$}

    \FORALL{$\assign{z} \in \{0,1\}^{n_1}$}

        \STATE{$V[i] \leftarrow V[i] + Evals[W^{n-n_1}_{w}[i],\assign{z}]$ \com{sum is over $\mathbb{F}_2$}}

    \ENDFOR
\ENDFOR
\RETURN{$V$}
\end{algorithmic}
\fullhline
\end{algorithm}

\subsection{Analysis}
\label{sec:analysis}

In this section we analyze Algorithm~\ref{alg:parity}, completing the proof of Theorem~\ref{thm:main2}.
Specifically, we prove the following two lemmas.
\begin{lemma}[Success probability of Algorithm~\ref{alg:parity}]
\label{lem:prob1}
For $t = 48n + 1$, Algorithm~\ref{alg:parity} is correct with probability at least $1 - 2^{-n}$.
\end{lemma}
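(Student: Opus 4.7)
The plan is to prove by induction on the recursion depth of Algorithm~\ref{alg:mp} that every invocation of MultParityCount returns the correct vector $V$ except with probability at most $2^{-n}/\mathrm{poly}(n)$; the lemma then follows immediately from Algorithm~\ref{alg:parity}, which merely sums the entries of the returned $V$ over $\mathbb{F}_2$. Note that the only source of randomness is in the Razborov--Smolensky sampling of the coefficients $\{\alpha_{ij}^{(k)}\}$: the M\"{o}bius transforms, the summations, and the base case BruteForceMultParity (Algorithm~\ref{alg:bf}, which evaluates each $P_j$ on $W^{n-n_1}_{w}\times\{0,1\}^{n_1}$ via a M\"{o}bius transform and ANDs the complements) are all deterministic and exact. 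This settles the base case; it also makes clear that in the inductive step we only need to control one error event per call: the majority vote.

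For the inductive step, condition on the recursive call in line~7 returning correct outputs for all $k \in \{1,\ldots,t\}$. Because those outputs equal $\sum_{\assign{v}} \tilde{F}^{(k)}(\assign{y},\assign{u},\assign{v})$ precisely on the set $W^{n-n_2}_{d\cdot\ell - n_2}$, Fact~\ref{fact:inter} implies that the M\"{o}bius transform in line~8 yields the exact ANF of $G^{(k)}(y,u) = \tilde{F}_1^{(k)}(y,u)$ (which has degree at most $d\cdot\ell - n_2$), and line~9 then yields the true values $G^{(k)}(\assign{y},\assign{u})$ for every $(\assign{y},\assign{u}) \in W^{n-n_1}_{w}\times\{0,1\}^{n_1-n_2}$. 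Hence the only residual error comes from the majority vote: for each such pair, equation~(\ref{eq:approx1}) guarantees that each of the $t$ values $G^{(k)}(\assign{y},\assign{u})$ independently matches the true partial parity $\sum_{\assign{v}} F(\assign{y},\assign{u},\assign{v})$ with probability at least $\tfrac{3}{4}$. A Hoeffding (or multiplicative Chernoff) bound therefore bounds the probability of an incorrect majority by $\exp(-t/8) \leq 2^{-8n}$ for $t = 48n+1$, using $e^8 > 2^{11}$ to convert exponents.

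A union bound over the at most $|W^{n-n_1}_{w}|\cdot 2^{n_1 - n_2} \leq 2^n$ relevant pairs shows that a single MultParityCount invocation fails with probability at most $2^{-7n}$, conditional on its child calls being correct. Since $n_2$ decreases by $\Theta(n)$ at each recursion level (by the choice $n_2 = \lfloor n_1 - \lambda n\rfloor$ with $\lambda$ constant), the recursion depth is bounded by a constant $D = O(1/\lambda)$, and with branching factor $t = O(n)$ the total number of MultParityCount calls in the tree is polynomial in $n$. A final union bound over all of them yields overall failure probability at most $\mathrm{poly}(n)\cdot 2^{-7n}$, which is comfortably below $2^{-n}$ for all sufficiently large $n$. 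The only mild subtlety is bookkeeping the conditional success events up the recursion tree (so that the Chernoff trials remain independent within each node, and only the cascading-correctness is managed by union bound across nodes); the concentration inequality itself and the algebraic identities used are entirely routine.
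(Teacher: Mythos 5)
Your proof is correct and follows essentially the same approach as the paper's: bound each scoreboard majority's failure probability via a Chernoff-type concentration inequality (you use the Hoeffding form $\exp(-t/8)$, while the paper uses the multiplicative form $\exp(-t/24)$; both suffice for $t = 48n+1$), then take a union bound over all scoreboard entries and over the polynomially many nodes in the recursion tree. Your explicit induction on recursion depth, together with the observation that the only randomness is in the Razborov--Smolensky coefficients, is a slightly more careful bookkeeping of the same structure that the paper states compactly as ``correct if the scoreboard majority votes are equal to the corresponding parities in the top level instance and in all recursive calls.''
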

The proof is similar to that of Bj\"{o}rklund et al.~\cite{BjorklundK019} and is given in Appendix~\ref{app:success}.

\begin{lemma}[Runtime of Algorithm~\ref{alg:parity}]
\label{lem:runtime1}
For $\lambda = \epsilon$, and $\kappa_0 = 1 - \tau(d)$, Algorithm~\ref{alg:parity} runs in time $O^{*}\left( 2^{(\tau(d) + \epsilon)n} \right)$.
\end{lemma}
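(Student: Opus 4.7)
The plan is to bound, level by level, both (i) the number of recursion nodes in Algorithm~\ref{alg:mp}'s call tree and (ii) the non-recursive work at each node, then combine them via the choice $\kappa_0 = 1-\tau(d)$ and $\lambda = \epsilon$.

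First, I would analyze the shape of the recursion. Each non-leaf call of $\mathrm{MultParityCount}$ with parameter $n_1$ spawns $t = 48n+1$ child calls, each with parameter $n_2 = \lfloor n_1 - \lambda n\rfloor$, until $n_2 \leq 0$ and control passes to $\mathrm{BruteForceMultParity}$. Starting from $n_1^{(0)} = \lfloor \kappa_0 n\rfloor = \lfloor (1-\tau(d))n\rfloor$, the $n_1$-parameter drops by $\lfloor\lambda n\rfloor$ per level, so the recursion depth is at most $\lceil (1-\tau(d))/\lambda \rceil = O(1/\epsilon)$. Hence the total number of nodes in the call tree is at most $t^{O(1/\epsilon)} = n^{O(1/\epsilon)} = \mathrm{poly}(n)$ for any constant $\epsilon$.

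Second, I would account for the non-recursive work per node. At an internal node, each of the $t$ iterations performs: sampling the coefficients $[\alpha]^{(k)}$ (which is $\mathrm{poly}(n)$ work), an inverse M\"{o}bius transform that interpolates $G^{(k)}(y,u)$ from its evaluations on $W^{n-n_2}_{d\ell - n_2}$, and a forward M\"{o}bius transform that evaluates $G^{(k)}$ on the product domain $W^{n-n_1}_w \times \{0,1\}^{n_1 - n_2}$, followed by scoreboard updates and a majority vote. Using the M\"{o}bius transform cost from Section~\ref{subsec:boolean} (each transform costs $O(n)$ times its domain size), a single non-leaf call costs
\begin{equation*}
O^{*}\!\left(\lbinom{n-n_2}{d\ell - n_2} + \lbinom{n-n_1}{w} \cdot 2^{n_1 - n_2}\right),
\end{equation*}
and a leaf call costs $O^{*}(m \cdot \lbinom{n-n_1}{w} \cdot 2^{n_1})$ for the bitwise AND of the $m$ polynomial evaluations.

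Third, I would bound all occurring binomials uniformly by $2^{\tau(d) n}$. Write $p_i = n_1^{(i)}/n$; the update $n_1^{(i+1)} = n_1^{(i)} - \lfloor\lambda n\rfloor$ keeps $p_i \in [0, 1-\tau(d)]$ throughout, and by Fact~\ref{fact:tau}, $1-\tau(d) \leq 1/(2d-1)$, so $p_i \in [0, 1/(2d-1)]$ at every level. For $i \geq 1$ the recursion convention gives $w^{(i)} = (d-1) n_1^{(i)} + 2d$, so $w^{(i)}/(n - n_1^{(i)}) \leq (d-1)p_i/(1-p_i) + O(1/n) \leq 1/2 + O(1/n)$, and Fact~\ref{fact:entropy1} (together with the continuity of $\HH$ to absorb the additive $O(1/n)$ in the exponent into a $\mathrm{poly}(n)$ factor) gives
\begin{equation*}
\lbinom{n-n_1^{(i)}}{w^{(i)}} \leq \mathrm{poly}(n) \cdot 2^{(1-p_i)\HH((d-1)p_i/(1-p_i))\, n} = \mathrm{poly}(n) \cdot 2^{f_d(p_i)\, n} \leq \mathrm{poly}(n) \cdot 2^{\tau(d) n}
\end{equation*}
by Definition~\ref{def:f}. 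At level $0$, $w^{(0)} = n - n_1^{(0)}$, so $\lbinom{n-n_1^{(0)}}{w^{(0)}} = 2^{n - n_1^{(0)}} = 2^{\tau(d) n}$. The ``outgoing'' binomial $\lbinom{n-n_2}{d\ell - n_2}$ equals the ``incoming'' binomial $\lbinom{n-n_1^{(i+1)}}{w^{(i+1)}}$ at the next level, so the same bound applies.

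Finally, I combine: every non-leaf call costs $O^{*}(2^{\tau(d) n} \cdot 2^{\lambda n}) = O^{*}(2^{(\tau(d) + \lambda)n})$ thanks to the $2^{n_1 - n_2} = 2^{\lambda n}$ factor; every leaf call costs $O^{*}(2^{n_1^{(\mathrm{leaf})}} \cdot 2^{\tau(d) n}) = O^{*}(2^{(\tau(d)+\lambda)n})$ since $n_1^{(\mathrm{leaf})} \leq \lambda n$. Multiplying by the $\mathrm{poly}(n)$ bound on the number of call-tree nodes and substituting $\lambda = \epsilon$ yields the total runtime $O^{*}(2^{(\tau(d) + \epsilon)n})$, as required. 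The main subtle step is certifying that $p_i$ never leaves the interval $[0, 1/(2d-1)]$ on which $\tau(d)$ is defined as the maximum of $f_d$; this is exactly where the choice $\kappa_0 = 1-\tau(d)$ (together with Fact~\ref{fact:tau}) does the work, and the $+2d$ slack in $w^{(i)}$ is absorbed as lower-order noise.
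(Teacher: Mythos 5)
Your proof is correct and follows essentially the same route as the paper's: you decompose the cost into the two M\"{o}bius transforms per node plus the brute-force leaves, bound the relevant binomials by $2^{f_d(\cdot)n} \leq 2^{\tau(d)n}$ using Fact~\ref{fact:entropy1} after checking that the fraction stays in $[0,1/(2d-1)]$ (which is exactly where $\kappa_0 = 1-\tau(d)$ and Fact~\ref{fact:tau} enter), and multiply by the polynomially bounded node count $t^{O(1/\epsilon)}$. The only cosmetic difference is that you track $p_i = n_1^{(i)}/n$ and bound the call tree globally, whereas the paper tracks $\rho_i = n_2^{(i)}/n$ and sums costs level by level; these are the same argument up to an index shift.
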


\subsubsection{Runtime analysis.}

We prove Lemma~\ref{lem:runtime1}.

\begin{proof}
Denote by $T(n_1,w)$ the runtime of $\mathrm{MultParityCount}(\{P_j(y,z)\}_{j =1}^{m},n_1,w)$ (we omit the parameters $n$ and $d$ that remain unchanged in the recursive calls).
Assuming that $n_2 > 0$ and the recursive version (rather than brute force) is called,
\begin{align}
\label{eq:runtime}
\begin{split}
T(n_1,w) =
O\left( t \cdot \left( T(n_2,d \cdot \ell - n_2) +  n \cdot \lbinom{n-n_1}{w} \cdot 2^{n_1-n_2} + n \cdot \lbinom{n-n_2}{ d \cdot \ell - n_2} \right) \right) = \\
O( n \cdot T(n_2,n_2(d-1) + 2d)) + O \left( n^2 \cdot \lbinom{n-n_1}{w} \cdot 2^{n_1-n_2} \right) + O \left( n^2 \cdot \lbinom{n-n_2}{ n_2(d-1) + 2d} \right),
\end{split}
\end{align}
recalling that $\ell = n_2 + 2$, $t = 48n + 1$.

The first term corresponds to the recursive calls. The second term corresponds to the evaluations of $G^{(k)}(y,u)$ on the set $W^{n-n_1}_{w} \times \{0,1\}^{n_1 - n_2}$ using the $\mathrm{M\ddot{o}bius}$ transform, as described in Section~\ref{subsec:boolean}.
The third term corresponds to the interpolation of $G^{(k)}(y,u)$ from its values on the set $W^{n-n_2}_{d \cdot \ell - n_2}$ via the $\mathrm{M\ddot{o}bius}$ transform.

Finally, the second term dominates the runtime complexity of the remaining steps as the complexity of updating the scoreboards (and the final majority votes) is $O(|W^{n-n_1}_{w}| \cdot 2^{n_1-n_2})$. Moreover, the ANF computation of $\{R_i(y,z)\}_{i = 1}^{\ell}$ requires $O \left(t \cdot \ell \cdot m  \cdot \lbinom{n}{d} \right) = O^{*}(1)$ time (assuming $m$ is polynomial in $n$).

\paragraph{Runtime analysis by recursion level.}
We will select the parameters such that the recursion runs for a constant number of $D + 1$ levels (where $D= D(d)$) and the last level applies brute force. Note that the $i$'th level of the recursion tree (starting from the top level, where $i = 0$) contains $O(n^{i})$ nodes.
We will now start indexing the recursion variables by their level of recursion~$i$.

Let $0 < \kappa_0 \leq \tfrac{1}{2d-1}$ and $0 < \lambda < 1$ be the parameters of algorithms~\ref{alg:parity} and~\ref{alg:mp}, respectively. We denote the initial value of $n_1$ in Algorithm~\ref{alg:parity} by $n_1^{(0)} = \lfloor \kappa_0 n \rfloor$. Similarly,
$n_1^{(i+1)} = n_2^{(i)} = \lfloor n_1^{(i)} - \lambda n \rfloor$.

We have $w^{(0)} = n - n_1^{(0)}$, while for $i \geq 1$, $w^{(i)} = n_2^{(i-1)}(d-1) + 2d$. Focusing on $i \geq 1$,
\begin{align*}
\lbinom{n-n_1^{(i)}}{w^{(i)}} =
\lbinom{n-n_2^{(i-1)}}{n_2^{(i-1)}(d-1) + 2d} \leq
n^{2d} \lbinom{n-n_2^{(i-1)}}{ n_2^{(i-1)}(d-1)} \leq
n^{2d+1} \tbinom{n-n_2^{(i-1)}}{n_2^{(i-1)}(d-1)},
\end{align*}
where for the final inequality we assume that $2(n_2^{(i-1)}(d-1)) \leq n-n_2^{(i-1)}$.
Since $d-1 \geq 1$ and the sequence $n_2^{(i-1)}$ is decreasing as a function of $i$,
it is sufficient to ensure this condition for $i=1$, where it holds if
$2\kappa_0(d-1) \leq 1 - \kappa_0$, which is guaranteed since we choose $\kappa_0 \leq \tfrac{1}{2d-1}$.

Using Fact~\ref{fact:entropy1} we obtain
 \begin{align*}
\lbinom{n-n_1^{(i)}}{w^{(i)}} \leq
n^{2d+1} 2^{(n-n_2^{(i-1)})  \HH \left(\tfrac{n_2^{(i-1)}(d-1)}{n-n_2^{(i-1)}} \right)} =
n^{2d+1} 2^{n(1 - \rho_{i-1}) \HH \left(\tfrac{\rho_{i-1}(d-1)}{1-\rho_{i-1}} \right)},
\end{align*}
where we set $\rho_{i} = \tfrac{n_2^{(i)}}{n}$. Recalling that
$f_d(p) = (1 - p) \cdot \HH\left(\tfrac{(d-1) p}{1-p} \right)$
we deduce
\begin{align*}
\lbinom{n-n_1^{(i)}}{w^{(i)}} \leq n^{2d+1} 2^{f_d(\rho_{i-1})n}.
\end{align*}
We also have $n_1^{(i)} - n_2^{(i)} = n_1^{(i)} - \lfloor n_1^{(i)} - \lambda n \rfloor \leq \lambda n + 1$.
Plugging these into the second term of~(\ref{eq:runtime}), we bound it by
$$
O \left( n^{2d+3} \cdot 2^{(\lambda + f_d(\rho_{i-1}) )n}  \right)
.$$
Similarly, the third term of~(\ref{eq:runtime}) is bounded by
$
O \left( n^{2d+3} \cdot 2^{f_d(\rho_{i}) n}  \right)
$.
Since there are $O(n^i)$ nodes in the $i$'th level of the recursion, for $1 \leq i < D$ the total runtime for all nodes at the $i$'th level is bounded by
\begin{align}
\label{eq:leveli}
\begin{split}
O \left( n^{i+2d+3} \cdot ( 2^{(\lambda + f_d(\rho_{i-1}) )n} + 2^{f_d(\rho_{i}) n } ) \right) =
O^{*} \left( 2^{(\lambda + f_d(\rho_{i-1}) )n} + 2^{f_d(\rho_{i}) n}  \right),
\end{split}
\end{align}
as $i < D = O(1)$. For the root node we have
$$\lbinom{n-n_1^{(0)}}{w^{(0)}} = \lbinom{n-n_1^{(0)}}{n-n_1^{(0)}} = 2^{n-n_1^{(0)}} = 2^{n - \lfloor \kappa_0 n \rfloor} \leq 2^{1 + (1 - \kappa_0) n}.$$
Plugging this into~(\ref{eq:runtime}), its runtime is bounded by
\begin{align}
\label{eq:level0}
\begin{split}
O \left(n^2 \cdot 2^{(1 - \kappa_0 + \lambda) n} +  n^{2d+3} 2^{ f_d(\rho_{0}) n} \right) =
O^{*} \left( 2^{(1 - \kappa_0 + \lambda) n} + 2^{ f_d(\rho_{0}) n} \right).
\end{split}
\end{align}

For $i=D$, we solve the problem by brute force. By similar analysis, the runtime is bounded by
$O^{*}\left( 2^{ f_d(\rho_{D-1}) n + n_1^{(D)} }  \right)$.
Using the brute force condition $n_2^{(D)} \leq 0$, we obtain $n_1^{(D)} \leq \lambda n + 1$ and bound the runtime of each node of level $D$ by
\begin{align}
\label{eq:levellast}
O^{*}\left( 2^{(\lambda + f_d(\rho_{D-1}))n} \right),
\end{align}
and as there are $O(n^D) = n^{O(1)}$ nodes at this level, the asymptotic total runtime at this level is bounded similarly.

\paragraph{Parameter selection.}
The total runtime is determined by the runtime expressions at all levels, namely~(\ref{eq:leveli}),~(\ref{eq:level0}) and~(\ref{eq:levellast}).
Fix a value of $d > 1$ and a sufficiently small $\epsilon >0$. We will select the parameters $\kappa_0,\lambda$ such that $D = O(1)$ and the runtime is
$$O^{*}\left( 2^{(\tau(d) + \epsilon)n} \right),$$
where $\tau(d)$ is the maximum of the function $f_d(p)$
in the interval $\left[0,\tfrac{1}{2d-1} \right]$.
Note that $f_d(\tfrac{1}{2d-1}) = 1 - \tfrac{1}{2d-1}$, so $\tau(d) \geq 1 - \tfrac{1}{2d-1}$.

If we take $\lambda$ to be sufficiently small, then optimizing the parameters amounts to balancing the exponent terms $1 - \kappa_0$ in~(\ref{eq:level0}) and the remaining terms of the form $f_d(\rho_{i})$.

We choose
$\lambda = \epsilon$, $\kappa_0 = 1 - \tau(d)$. Recall that we run brute force once $n_2^{(D)} \leq  0$. Since $n_2^{(i)} \leq (\kappa_0 - (i+1) \lambda)n$, then $n_2^{(D)} \leq (\kappa_0 - (D+1) \epsilon)n$ and therefore $D \leq \tfrac{\kappa_0}{\epsilon} - 1 = O(1)$ as required.

As $\kappa_0 n \geq  n_1^{(0)} \geq n_2^{(0)} \geq \ldots \geq n_2^{(D-1)} > 0$ and
$\kappa_0 = 1 - \tau(d) \leq \tfrac{1}{2d - 1}$, then
$\rho_{i} = \tfrac{n_2^{(i)}}{n} \in [0,\tfrac{1}{2d - 1}]$ for all $i \in \{0,\ldots,D-1\}$, and therefore (by the definition of $\tau(d)$), $2^{f_d(\rho_{i})n} \leq 2^{\tau(d) n}$. Since $\lambda = \epsilon$, each of the expressions~(\ref{eq:leveli}),~(\ref{eq:level0}) and~(\ref{eq:levellast}) is bounded by $O^{*}\left( 2^{(\tau(d) + \epsilon)n} \right)$ as claimed.

Finally, it is possible to reduce the complexity to $O^{*}\left( 2^{(\tau(d) + o(1))n} \right)$. For example, by choosing $\lambda = \Theta \left(\tfrac{1}{\log n} \right)$, $D$ will no longer be constant, but the total number of recursive calls is still subexponential.
\end{proof}

\section{Exhaustively Solving Polynomial Equation Systems}
\label{sec:exhaust}

In this section we prove the following stronger variant of Theorem~\ref{thm:exhaust} which gives a tighter bound on the runtime.
\begin{theorem}
\label{thm:exhaust2}
There is a randomized algorithm that given a system $E$ of polynomial equations over $\mathbb{F}_2$ with degree at most $d$ in $n$ variables, outputs all $K$ solutions to $E$ or correctly decides that a solution does not exist with high probability. For an arbitrarily small $\epsilon > 0$, the runtime of the algorithm is bounded by
$$O \left(\max \left( 2^{(\tau(d) + \epsilon)n}, K \cdot 2^{\epsilon n} \right) \right).$$
\end{theorem}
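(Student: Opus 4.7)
The plan is to use Algorithm~\ref{alg:mp} as the engine of a recursive enumeration procedure. At a recursive node with $n$ currently unfixed variables and system $E'$, pick $r = \lceil \epsilon n \rceil$, partition the variables into $y = (y_1,\ldots,y_r)$ and $z = (z_1,\ldots,z_{n-r})$, identify the set $Y^{*} \subseteq \{0,1\}^r$ of prefixes $\assign{y}$ for which $E'_{\assign{y}}$ (the system obtained by substituting $y = \assign{y}$) has at least one solution, and recursively enumerate the solutions of $E'_{\assign{y}}$ for each $\assign{y} \in Y^{*}$. Substitution only reduces degrees, so the recursion stays within degree-$d$ systems.

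I would compute $Y^{*}$ by invoking $\mathrm{MultParityCount}$ combined with a Valiant--Vazirani style isolation sweep on the $z$-variables: for each $k \in \{0,1,\ldots,n-r\}$ and $\Theta(n)$ independent repetitions, sample $k$ random affine forms in $z$, append them to $E'$ (preserving degree $d$), and invoke $\mathrm{MultParityCount}$ to obtain, in one call, the partial parity $\sum_{\assign{z}} F^{(k,t)}(\assign{y},\assign{z}) \bmod 2$ for every $\assign{y} \in \{0,1\}^r$. Mark $\assign{y}$ active whenever any trial returns~$1$. A fibrewise Valiant--Vazirani argument guarantees that every $\assign{y}$ with $K_{\assign{y}} \geq 1$ is marked with probability $1 - 2^{-\Omega(n)}$, and a union bound over the $O(K \cdot \mathrm{poly}(n))$ recursion nodes keeps the total failure probability small.

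Letting $T(n, K)$ denote the runtime, the recurrence
\begin{align*}
T(n, K) \leq O^{*}\!\left(2^{(\tau(d) + \epsilon)n}\right) + \sum_{\assign{y} \in Y^{*}} T(n - r,\, K_{\assign{y}})
\end{align*}
combined with $|Y^{*}| \leq \min(K, 2^r)$, $\sum_{\assign{y}} K_{\assign{y}} = K$, and the inductive hypothesis $T(n', K') \leq \max(2^{(\tau(d) + \epsilon) n'},\, K' \cdot 2^{\epsilon n'})$ yields
\begin{align*}
T(n, K) \leq O^{*}\!\left(2^{(\tau(d) + \epsilon)n}\right) + 2^{r + (\tau(d) + \epsilon)(n - r)} + K \cdot 2^{\epsilon n}.
\end{align*}
Choosing the algorithm's $\epsilon$ parameter as a sufficiently small fraction of the target (depending on $d$) makes the middle term, equal to $2^{(\tau(d)+\epsilon)n + r(1-\tau(d)-\epsilon)}$, absorbable into $2^{(\tau(d) + \epsilon) n}$, closing the induction.

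The main obstacle is two-fold. First, Lemma~\ref{lem:runtime1} bounds $\mathrm{MultParityCount}$'s runtime for the parameter regime $n_1 = (1-\tau(d))n$ used by Algorithm~\ref{alg:parity}; for the enumeration regime $(n_1, w) = (n - r, r)$ with small $r$, one either verifies the bound still holds by revisiting the internal $\lambda$ and $D$ choices in Algorithm~\ref{alg:mp}, or implements the prefix-finding step via $2^r$ decision-oracle calls (each internally invoking $\mathrm{MultParityCount}$ at the good parameters), which attains the same $O^{*}(2^{(\tau(d)+\epsilon)n})$ cost. Second, the per-level $O(\epsilon)$ slack in the exponent compounds across the $O(\log n / \epsilon)$ recursion levels, so the argument must track which branch of the inductive max dominates at each level and confirm that the geometric decay of the per-node cost at deeper levels outweighs the growth of the node count, keeping the total within $O(\max(2^{(\tau(d) + \epsilon)n},\, K \cdot 2^{\epsilon n}))$.
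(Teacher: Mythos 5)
Your fibrewise Valiant--Vazirani detection step is sound, and it can indeed be implemented within budget --- most cleanly by a single call to $\mathrm{MultParityCount}$ with the paper's parameters $n_1 = \lfloor(1-\tau(d))n\rfloor$, $w = n - n_1$, followed by summing the returned partial parities over the suffixes of each $r$-bit prefix. The genuine gap is in the cost accounting of your outer recursion, and it is not an $\epsilon$-bookkeeping issue. In your inductive step the term $|Y^{*}|\cdot 2^{(\tau(d)+\epsilon)(n-r)} \leq 2^{(\tau(d)+\epsilon)n}\cdot 2^{r(1-\tau(d)-\epsilon)}$ exceeds the target by the exponential factor $2^{r(1-\tau(d)-\epsilon)}$, and that factor is dictated by the \emph{inductive hypothesis} at $n-r$ variables, not by the internal parameter of $\mathrm{MultParityCount}$; shrinking the latter cannot remove it, so the induction does not close. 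Unrolling the tree makes the failure concrete: at depth $j$ there can be $\min(K,2^{jr})$ live prefixes, each paying $\Omega^{*}\bigl(2^{\tau(d)(n-jr)}\bigr)$ for its own detection sweep, and the level sum $\min(K,2^{jr})\cdot 2^{\tau(d)(n-jr)}$ peaks at $jr\approx\log K$ with value $K^{1-\tau(d)}\cdot 2^{\tau(d)n}$. Writing $K=2^{\kappa n}$, the exponent $\kappa(1-\tau(d))+\tau(d)$ strictly exceeds $\max(\tau(d),\kappa)$ for every $0<\kappa<1$; e.g.\ for $d=2$ and $K=2^{\tau(2)n}$ your algorithm costs about $2^{0.906n}$ against the claimed $2^{(0.694+\epsilon)n}$. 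This is the classic overhead of prefix-tree self-reduction with a decision oracle whose cost does not shrink proportionally to the number of live branches.

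The missing idea is to exploit the amortization that $\mathrm{MultParityCount}$ already provides and never recurse on prefixes: one call with $n-n_1\approx\log K+2$ returns \emph{all} $\approx 4K$ fibre parities at total cost $O^{*}\bigl(\max(2^{(\tau(d)+\epsilon)n},K\cdot 2^{\epsilon n})\bigr)$, i.e.\ essentially constant cost per fibre, whereas you pay $2^{\tau(d)(n-jr)}$ per live prefix. The paper therefore works at a single level: it applies a uniformly random invertible change of variables $x=Bv$ so that each fixed solution is isolated in its $\assign{y}$-fibre with probability at least $\tfrac12$ (the isolation probability is $1-K\cdot 2^{n_1-n}$, which forces the choice $n-n_1\approx\log K$, with $K$ estimated by sampling); an isolated solution makes its fibre parity equal to $1$, and its remaining $n_1$ coordinates are read off from $n_1$ further $\mathrm{MultParityCount}$ calls in which $z_i$ is fixed to $0$, by comparing $U(\assign{y},i,0)$ with the full fibre parity. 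Each of $r=2n$ independent rounds outputs every solution that happens to be isolated (false positives are filtered by direct verification), so all $K$ solutions appear with high probability at total cost $O^{*}\bigl(n_1\cdot\max(2^{(\tau(d)+\epsilon)n},K\cdot 2^{\epsilon n})\bigr)$. If you insist on a tree-based scheme you would have to batch all live prefixes of a given depth into a single $\mathrm{MultParityCount}$ call and then still recover the deep bits without branching --- at which point you have essentially rederived the paper's construction.
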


\begin{proof}[of Theorem~\ref{thm:exhaust}]
Theorem~\ref{thm:exhaust} is obtained from Theorem~\ref{thm:exhaust2} in a similar way that Theorem~\ref{thm:main} is obtained from Theorem~\ref{thm:main2}.
\end{proof}

The algorithm's pseudo-code is given in Algorithm~\ref{alg:exhaust}. We proceed with a detailed description and analysis that bounds the complexity by $O^{*} \left(\max \left( 2^{(\tau(d) + \epsilon)n}, K \cdot 2^{\epsilon n} \right) \right)$ (which is sufficient for establishing the bound of Theorem~\ref{thm:exhaust2}).

Our approach isolates solutions similarly to the Valiant-Vazirani affine hashing~\cite{ValiantV86}. However, the affine hashing generally isolates only one solution at a time, and applying it to exhaust all solutions will be inefficient unless their number is very small. Thus, we apply a variant of the affine hashing that isolates and outputs many solutions \emph{in parallel}.

Recall that Algorithm~\ref{alg:parity} partitions the variables into two sets $(y,z)$ according to a parameter $n_1$ and calls $\mathrm{MultParityCount}( \{P_j(y,z)\}_{j =1}^{m}, n_1,w)$. If the returned parity for a specific $\assign{y} \in \{0,1\}^{n-n_1}$ is 1 and the output is correct, then there exists a solution $\assign{x} = (\assign{y},\assign{z})$ to $E$ for some unknown $\assign{z} \in \{0,1\}^{n_1}$.
We will be particularly interested in the case where there exists only one such solution.
\begin{definition}[Isolated solutions]
A solution $\assign{x} = (\assign{y},\assign{z})$ to $E = \{P_j(y,z)\}_{j = 1}^{m}$ is called isolated (with respect to the variable partition $(y,z)$), if for any $\assign{z}' \neq \assign{z}$, $(\assign{y},\assign{z}')$ is not to a solution to $E$.
\end{definition}

We first describe how to output all isolated solutions with respect to $(y,z)$ using a total of $n_1+1$ calls to $\mathrm{MultParityCount}$. We will assume that all returned parities by $\mathrm{MultParityCount}$ calls are correct (by our parameter selection, this will hold except with negligible probability).

\paragraph{Outputting isolated solutions.} After running $\mathrm{MultParityCount}$ once, let us momentarily assume that all $\assign{y} \in \{0,1\}^{n-n_1}$ for which the returned parity is 1 (namely, $\sum_{\assign{z} \in \{0,1\}^{n_1}} F(\assign{y},\assign{z}) = 1$) correspond to isolated solutions.
The remaining $n_1$ bits of these solutions can be recovered one-by-one using $n_1$ additional calls to $\mathrm{MultParityCount}$, where in call $i$, we fix variable $z_i$ to 0 in $\{P_j(y,z)\}_{j =1}^{m}$ (all calls are with respect to the same partition $(y,z)$, but $z_i$ is fixed to 0 in call $i$).

For $\assign{y} \in \{0,1\}^{n-n_1}$, $i \in \{1,\ldots,n_1\}$, $b \in \{0,1\}$, let us denote
\begin{align*}
U(\assign{y},i,b) = \sum_{ \assign{z}_1, \ldots ,\assign{z}_{i-1},\assign{z}_{i+1},\ldots,\assign{z}_{n}  \in \{0,1\}^{n_1-1}}
F(\assign{y},\assign{z}_1, \ldots ,\assign{z}_{i-1},b,\assign{z}_{i+1},\ldots,\assign{z}_{n}).
\end{align*}
By running $\mathrm{MultParityCount}$ with $z_i = 0$ we derive $U(\assign{y},i,0)$ for all $\assign{y} \in \{0,1\}^{n-n_1}$. Since
\begin{align*}
U(\assign{y},i,0) + U(\assign{y},i,1) = \sum_{\assign{z} \in \{0,1\}^{n_1}} F(\assign{y},\assign{z}),
\end{align*}
then assuming $\sum_{\assign{z} \in \{0,1\}^{n_1}} F(\assign{y},\assign{z}) = 1$, exactly one of the expressions $U(\assign{y},i,0)$ and $U(\assign{y},i,1)$ has a value of 1, and the assignment of $z_i$ in this expression is the value of $z_i$ in the isolated solution whose prefix is $\assign{y}$.

Some of the 1 parities returned by the first $\mathrm{MultParityCount}$ call may not correspond to an isolated solution, but rather to an odd number of solutions which is larger than 1. In this case, the procedure for such a $\assign{y}$ may result in a ``false positive''. Thus, we need to test that each output is indeed a solution to $E$.

\paragraph{Isolating solutions.}
It remains to describe how to isolate solutions.
For this purpose, we perform a change of variables by first selecting a uniform $n \times n$ invertible matrix $B \in \mathbb{F}_2^{n \times n}$ (e.g., by rejection sampling). We replace $x_i$ in all its occurrences in $E$ by the linear expression $\sum_{j = 1}^{n} B[i][j] v_j$ over the new variables $(v_1,\ldots,v_n)$. Note that we have $x_i = (Bv)_i$ and so $x = Bv$ as vectors of variables.

Since the change of variables is linear, the result is a system $E'$ of the same algebraic degree as $E$ over the new variables. $E'$ is equivalent to $E$ is the sense that any solution $\assign{v}$ to $E'$ corresponds to a solution $\assign{x} = B \assign{v}$ to $E$ which can be computed efficiently from $\assign{v}$ by linear algebra (and vise-versa). Thus, when we find an isolated solution $\assign{v}$ to $E'$ (with respect to some variable partition), we output $B \assign{v}$ as a solution to $E$.

For a parameter $r$, we will run the above procedure for $r$ iterations, each time performing a new and independent change of variables and outputting the isolated solutions with respect to the new variable set. Below we select the parameters and complete the analysis.

\begin{algorithm}[tb]
\caption{$\mathrm{ExhaustSolutions}(\{P_j(x)\}_{j =1}^{m})$}
\label{alg:exhaust}
\fullhline
\begin{algorithmic}
\STATE{Parameters: $n_1$}

\STATE{Initialization: $r \leftarrow 2n$}

\end{algorithmic}
\begin{algorithmic}[1]

\FORALL{$k \in \{1, \ldots, r\}$}

    \STATE{Sample uniform invertible matrix $B^{(k)} \in \mathbb{F}_2^{n \times n}$}

    \STATE{$\{Q_j^{(k)}(v)\}_{j =1}^{m} \leftarrow \mathrm{ChangeVariables}(B^{(k)}, \{P_j(x)\}_{j =1}^{m})$}

    \STATE{$ZV^{(k)}[0 \ldots n_1][0 \ldots 2^{n-n_1}-1] \leftarrow \vec{0}$ \com{init mult parity array per $z_i$}}

    \STATE{$ZV^{(k)}[0][0 \ldots 2^{n-n_1}-1] \leftarrow \mathrm{MultParityCount}( \{Q_j^{(k)}(y,z)\}_{j =1}^{m},n_1,n-n_1)$}

    \FORALL{$i \in \{1,\ldots,n_1\}$}

        \STATE{$ZV^{(k)}[i][0 \ldots 2^{n-n_1}-1] \leftarrow$ \\
         $\mathrm{MultParityCount}( \{Q_j^{(k)}(y,z_1,..,z_{i-1},0,z_{i+1},..,z_{n_1})\}_{j =1}^{m},n_1-1,n-n_1)$}

    \ENDFOR

    \FORALL{$\assign{y} \in \{0,1\}^{n-n_1}$}

        \IF{$ZV^{(k)}[0][\assign{y}] = 1$}

            \STATE{$sol \leftarrow \assign{y}$}

            \FORALL{$i \in \{1,\ldots,n_1\}$}

                \STATE{$p_0 \leftarrow ZV^{(k)}[i][\assign{y}]$}

                \IF{$p_0 = 1$}

                    \STATE{$sol \leftarrow sol \| 0$ \com{concatenate bit to solution}}

                \ELSE

                    \STATE{$sol \leftarrow sol \| 1$}

                \ENDIF

            \ENDFOR

            \IF{$B^{(k)} \cdot sol$ is a solution to $\{P_j(x)\}_{j =1}^{m}$}

                \STATE{\textbf{output} $B^{(k)} \cdot sol$}

            \ENDIF

        \ENDIF
    \ENDFOR

\ENDFOR
\end{algorithmic}
\fullhline
\end{algorithm}

\subsection{Analysis}

The following lemma completes the proof of Theorem~\ref{thm:exhaust2}.
\begin{lemma}[Analysis of Algorithm~\ref{alg:exhaust}]
\label{lem:prob2}
Let $K$ be the number of solutions to $E = \{P_j(x)\}_{j =1}^{m}$.
For $r = 2n$, there exist parameters for Algorithm~\ref{alg:parity} such that:
\begin{enumerate}
  \item It runs in time $O^{*} \left(\max \left( 2^{(\tau(d) + \epsilon)n}, K \cdot 2^{\epsilon n} \right) \right)$.
  \item It is correct with probability $1 - 2^{-\Omega(n)}$.
  \item Such parameters can be computed in time $O^{*}(2^{(1- \tau(d))n})$ with probability $1 - 2^{-\Omega(n)}$.
\end{enumerate}
\end{lemma}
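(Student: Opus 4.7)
The plan for the proof is to set the parameter $n_1$ of Algorithm~\ref{alg:exhaust} based on an estimate $\tilde{K}$ of $K$, choosing
$$n_1 \;=\; \min\Bigl\{\,\lfloor (1-\tau(d))n\rfloor,\; n - \lceil \log(c\tilde{K})\rceil\,\Bigr\}$$
for a suitable constant $c$. The cap $n_1 \leq \lfloor(1-\tau(d))n\rfloor$ guarantees $n_1/n \leq 1-\tau(d) \leq 1/(2d-1)$, keeping every recursion depth of the inner $\mathrm{MultParityCount}$ calls within the regime covered by the analysis of Lemma~\ref{lem:runtime1}, while the bound $2^{n-n_1} \geq c\tilde{K}$ makes the Valiant--Vazirani-style isolation argument go through.

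For the success probability (item~2), I would fix any solution $\assign{x}$ of $E$ and any other solution $\assign{x}' \neq \assign{x}$, and show that for uniformly random invertible $B \in \mathbb{F}_2^{n \times n}$ the first $n-n_1$ coordinates of $B^{-1}\assign{x} - B^{-1}\assign{x}' = B^{-1}(\assign{x}-\assign{x}')$ are all zero with probability $O(2^{-(n-n_1)})$; this is the standard fact that applying a uniformly random invertible map to a fixed nonzero vector yields a near-uniform nonzero vector. Union-bounding over the $K-1$ other solutions makes $\assign{x}$ isolated with respect to the partition $(y,z)$ with probability at least $1/2$, provided $2^{n-n_1} \geq cK$. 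Over $r = 2n$ independent iterations a fixed solution fails to ever be isolated with probability at most $2^{-2n}$, and a union bound over the $K \leq 2^n$ solutions gives total failure probability $\leq 2^{-n}$. Each $\mathrm{MultParityCount}$ call is correct with high probability by Lemma~\ref{lem:prob1} together with a union bound over the $O(n^2)$ calls made, and the final verification that $B^{(k)} \cdot \mathit{sol}$ satisfies $E$ filters spurious candidates coming from $\assign{y}$'s that carry an odd number of non-isolated solutions.

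For the runtime (item~1), the key observation is that the proof of Lemma~\ref{lem:runtime1} in fact bounds the runtime of $\mathrm{MultParityCount}$ for any $\kappa_0 \leq 1-\tau(d)$ by $O^{*}(\max(2^{(1-\kappa_0+\epsilon)n}, 2^{(\tau(d)+\epsilon)n}))$: the first term comes from the root-node evaluation and the second from the deeper recursion (which uses $\rho_i \leq \kappa_0 \leq 1/(2d-1)$). Plugging our choice of $n_1$ in gives $O^{*}(\max(K \cdot 2^{\epsilon n}, 2^{(\tau(d)+\epsilon)n}))$ per call, with $O(n)$ calls per iteration and $r = 2n$ iterations. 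The remaining per-iteration work --- iterating over the $2^{n-n_1}$ values of $\assign{y}$, reconstructing candidates bit-by-bit and verifying each against $E$ --- costs $O^{*}(2^{n-n_1}) = O^{*}(\max(K, 2^{\tau(d)n}))$, which is absorbed in the bound.

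For the parameter-computation bound (item~3) I would estimate $K$ by brute force at the default scale: fix $n_1^{\star} = \lfloor (1-\tau(d))n\rfloor$, draw $O(n)$ uniformly random $\assign{y}_0 \in \{0,1\}^{n-n_1^{\star}}$, and for each compute $c(\assign{y}_0) = |\{\assign{z} \in \{0,1\}^{n_1^{\star}} : F(\assign{y}_0,\assign{z}) = 1\}|$ by evaluating every $P_j$ on $\{0,1\}^{n_1^{\star}}$ via the M\"{o}bius transform and AND-ing the bit-vectors together, in time $O^{*}(2^{n_1^{\star}}) = O^{*}(2^{(1-\tau(d))n})$ per sample. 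The empirical average estimates $K \cdot 2^{n_1^{\star}-n}$, and a Chernoff bound yields concentration whenever this quantity is $\Omega(1/n)$. The main obstacle I foresee is the boundary regime $K \approx 2^{\tau(d)n}$, where the estimator can have high relative error; however in that regime the conservative default $n_1 = n_1^{\star}$ already attains the target runtime $O^{*}(2^{(\tau(d)+\epsilon)n})$, so overestimating $K$ is harmless, and we simply cap $n_1$ at $n_1^{\star}$ whenever $\tilde{K} \leq 2^{\tau(d)n}$. The entire parameter-computation step runs in $O^{*}(2^{(1-\tau(d))n})$ and succeeds with probability $1 - 2^{-\Omega(n)}$.
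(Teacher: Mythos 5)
Items~1 and~2 of your proof follow the paper's argument faithfully and are correct: the runtime analysis reuses the observation that, in the proof of Lemma~\ref{lem:runtime1}, the only term that depends on $\kappa_0$ is the root-node cost $O^{*}(2^{(1-\kappa_0+\lambda)n})$ while the recursive levels are bounded by $O^{*}(2^{(\tau(d)+\epsilon)n})$ as long as $\kappa_0 \leq 1-\tau(d) \leq \tfrac{1}{2d-1}$; and the isolation argument (random invertible $B$ maps a fixed nonzero difference vector into the $n_1$-dimensional coordinate subspace with probability about $2^{n_1-n}$, union bound, $r=2n$ amplification) matches the paper exactly.

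The gap is in item~3. Your block-sampling estimator does \emph{not} concentrate in the worst case. Each sample $c(\assign{y}_0)$ lies in $[0, 2^{n_1^\star}]$ while its mean $K\cdot 2^{n_1^\star - n}$ can be $O(1)$, so neither Chernoff nor Hoeffding gives the claimed concentration from $O(n)$ samples; the variance of $c(\assign{y}_0)$ can be as large as $K^2\cdot 2^{n_1^\star-n}$. Concretely, if all $K$ solutions share the same $\assign{y}$-prefix under the fixed partition used for estimation, then $c(\assign{y}_0)\in\{0,K\}$ and $O(n)$ random blocks all return $0$ except with probability $O(n\cdot 2^{-(n-n_1^\star)})$; your estimator then reports $\tilde K \approx 0$ even when $K$ is, say, $2^{0.9n}$. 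Setting the ``conservative default'' $n_1 = n_1^\star$ in this case does \emph{not} save you: the isolation probability $1 - K\cdot 2^{n_1^\star - n}$ is negative, essentially no solution is isolated, and the algorithm silently drops almost all solutions. So this is a \emph{correctness} failure for underestimated $K$, not merely a runtime artifact at the boundary $K\approx 2^{\tau(d)n}$, and capping $n_1$ only protects against overestimation. The paper avoids this by sampling $O^{*}(2^{(1-\tau(d))n})$ \emph{independent uniform points} $\assign{x}\in\{0,1\}^n$, each costing $O^{*}(1)$ to test; the number of hits is a Binomial with success probability $K/2^n$, and a standard Chernoff bound then gives a multiplicative-factor-2 estimate of $K$ (except with exponentially small probability) whenever $K\geq 2^{\tau(d)n-O(1)}$, with the default $n_1=\lfloor(1-\tau(d))n\rfloor$ used when the hit count is too low. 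Replacing your block estimator with this independent point estimator closes the gap and leaves the rest of your proof intact.
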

Note that for any $d \geq 2$, $\tau(d) > 1- \tau(d)$, so the total complexity remains $O^{*} \left(\max \left( 2^{(\tau(d) + \epsilon)n}, K \cdot 2^{\epsilon n} \right) \right)$.

\begin{proof}

\paragraph{Preliminary runtime analysis.} The change of variables requires recomputing the ANF of all polynomials. Each polynomial in $E$ has at most $n^d$ monomials, while the substitution and ANF computation for each monomial of degree $d$ requires $O(n^d)$ time. Therefore, the total runtime of this step is $O(m \cdot n^{2d}) = O^{*}(1)$. Additional linear algebra computations also have complexity $O^{*}(1)$.
Thus, the runtime of each of the $r$ iterations is dominated by the calls to $\mathrm{MultParityCount}$.

Next, we analyze the success probability as a function of the parameters $n_1,r$.
\paragraph{Success probability analysis.} Fix a solution $\assign{x}$ to $E$. Under a change of variables $B$, it is transformed into a solution $(\assign{y},\assign{z}) = \assign{v} = B^{-1} \assign{x}$ to $E'$. We will lower bound the probability that $\assign{v}$ is isolated by the change of variables, namely, for any $\assign{z}' \neq \assign{z}$, we require that $(\assign{y},\assign{z}')$ is not to a solution to $E'$.
Let $\assign{x}'$ be another solution to $E$. Then $B^{-1} \assign{x}' = (\assign{y},\assign{z}')$ for $\assign{z}' \neq \assign{z}$ if and only if $B^{-1}(\assign{x} + \assign{x}') = (\vec{0},\assign{z} + \assign{z}')$ and $\assign{z} + \assign{z}' \neq \vec{0}$, namely, $B^{-1}(\assign{x} + \assign{x}')$ is a vector in a specific $n_1$--dimensional subspace. Since $B^{-1}$ is itself a uniform invertible linear transformation, any non-zero vector is mapped to this space with probability at most $2^{n_1 - n}$. Taking a union bound over all $K$ solutions to $E$,
\begin{align*}
\Pr[B^{-1} \assign{x} \text{ is an isolated solution to } E' \text{ with respect to } (y,z)] \geq 1 - K \cdot 2^{n_1 - n}.
\end{align*}
\paragraph{Parameter selection.} We choose $n_1$ so the isolation probability above is at least $\tfrac{1}{2}$ (except with exponentially small probability). Then, setting $r = 2n$, each solution is isolated at least once with probability at least $1 - 2^{-2n}$. Consequently, by a union bound over all $K$ solutions to $E$, all of them will be output with probability at least $1 - 2^{-n}$.

The choice of $n_1$ for which the isolation probability is $\tfrac{1}{2}$ depends on $K$ which is unknown. However, by random sampling (using a standard Chernoff bound), the fraction of solutions $\tfrac{K}{2^n}$ can be estimated up to a multiplicative factor of 2 in complexity $O^{*} \left(\tfrac{2^n}{K} \right)$ and exponentially small error probably. For $K = \Omega( 2^{\tau(d) n})$ this requires $O^{*}(2^{(1- \tau(d)) n})$ time. In particular, we calculate in time $O^{*}(2^{(1- \tau(d)) n})$ a value $\tilde{K}$ such that if $\tilde{K} \leq 2^{\tau(d)n - 2}$ then $K \leq 2^{\tau(d) n - 1}$, while if $\tilde{K} > 2^{\tau(d)n - 2}$, then $\tfrac{K}{2} \leq \tilde{K} \leq 2K$ (except with exponentially small probably).

Therefore, if $\tilde{K} \leq 2^{\tau(d)n - 2}$, we set $n_1 = \lfloor (1 - \tau(d))n \rfloor$ (as chosen in Section~\ref{sec:analysis} for Algorithm~\ref{alg:parity}) and the complexity remains $O^{*}(2^{(\tau(d) + \epsilon)n})$. Otherwise, $\tilde{K} > 2^{\tau(d)n - 2}$ and we set $n_1 = \lfloor n -  \log \tilde{K}  - 2 \rfloor$. The complexity becomes $O^{*}(2^{n - n_1 + \epsilon n}) = O^{*}(\tilde{K} \cdot 2^{\epsilon n}) = O^{*}(K \cdot 2^{\epsilon n})$ (due to the factor $O^{*}( 2^{(1 - \kappa_0 + \lambda) n} )$ in the runtime expression for the root node~(\ref{eq:level0})).
\end{proof}

\subsubsection{Acknowledgements.} The author would like to thank Andreas Bj\"{o}rklund for pointing out the connection between our algorithms for solving quadratic polynomial systems and the algorithm of Bj\"{o}rklund and Husfeldt~\cite{BjorklundH13} for counting the parity of the number of Hamiltonian cycles in a directed graph.

\bibliographystyle{splncs03}
\bibliography{equationsbib}

\begin{thebibliography}{10}
\providecommand{\url}[1]{\texttt{#1}}
\providecommand{\urlprefix}{URL }

\bibitem{BardetFSS13}
Bardet, M., Faug{\`{e}}re, J., Salvy, B., Spaenlehauer, P.: {On the complexity
  of solving quadratic Boolean systems}. J. Complex.  29(1),  53--75 (2013)

\bibitem{Beigel93}
Beigel, R.: {The Polynomial Method in Circuit Complexity}. In: Proceedings of
  the Eigth Annual Structure in Complexity Theory Conference, San Diego, CA,
  USA, May 18-21, 1993. pp. 82--95. {IEEE} Computer Society (1993)

\bibitem{BjorklundH13}
Bj{\"{o}}rklund, A., Husfeldt, T.: {The Parity of Directed Hamiltonian Cycles}.
  In: 54th Annual {IEEE} Symposium on Foundations of Computer Science, {FOCS}
  2013, 26-29 October, 2013, Berkeley, CA, {USA}. pp. 727--735. {IEEE} Computer
  Society (2013)

\bibitem{BjorklundK019}
Bj{\"{o}}rklund, A., Kaski, P., Williams, R.: {Solving Systems of Polynomial
  Equations over {GF(2)} by a Parity-Counting Self-Reduction}. In: Baier, C.,
  Chatzigiannakis, I., Flocchini, P., Leonardi, S. (eds.) 46th International
  Colloquium on Automata, Languages, and Programming, {ICALP} 2019, July 9-12,
  2019, Patras, Greece. LIPIcs, vol. 132, pp. 26:1--26:13. Schloss Dagstuhl -
  Leibniz-Zentrum f{\"{u}}r Informatik (2019)

\bibitem{Buchberger65}
Buchberger, B.: {Ein {A}lgorithmus zum {A}uffinden der {B}asiselemente des
  {R}estklassenringes nach einem nulldimensionalen {P}olynomideal}. PhD thesis,
  Department of Mathematics, University of Innsbruck (1965)

\bibitem{Casanova17}
Casanova, A., Faug{\`e}re, J.C., Macario-Rat, G., Patarin, J., Perret, L.,
  Ryckeghem, J.: {GeMSS: A Great Multivariate Short Signature}. Submission to
  NIST (2017), \url{https://www-polsys.lip6.fr/Links/NIST/GeMSS.html}

\bibitem{CourtoisKPS00}
Courtois, N., Klimov, A., Patarin, J., Shamir, A.: {Efficient Algorithms for
  Solving Overdefined Systems of Multivariate Polynomial Equations}. In:
  Preneel, B. (ed.) Advances in Cryptology - {EUROCRYPT} 2000, International
  Conference on the Theory and Application of Cryptographic Techniques, Bruges,
  Belgium, May 14-18, 2000, Proceeding. Lecture Notes in Computer Science, vol.
  1807, pp. 392--407. Springer (2000)

\bibitem{faugere99}
Faug{\`e}re, J.C.: {A new efficient algorithm for computing Gr{\"o}bner bases
  (F4)}. {Journal of Pure and Applied Algebra}  139(1-3),  61--88 (Jun 1999)

\bibitem{Faugere02}
Faug\`{e}re, J.C.: {A New Efficient Algorithm for Computing Gr\"{o}bner Bases
  without Reduction to Zero (F5)}. In: Proceedings of the 2002 International
  Symposium on Symbolic and Algebraic Computation. pp. 75–--83. ISSAC ’02,
  Association for Computing Machinery, New York, NY, USA (2002)

\bibitem{ImpagliazzoP01}
Impagliazzo, R., Paturi, R.: {On the Complexity of k-SAT}. J. Comput. Syst.
  Sci.  62(2),  367--375 (2001)

\bibitem{JouxV17}
Joux, A., Vitse, V.: {A Crossbred Algorithm for Solving Boolean Polynomial
  Systems}. In: Kaczorowski, J., Pieprzyk, J., Pomykala, J. (eds.)
  Number-Theoretic Methods in Cryptology - First International Conference,
  NuTMiC 2017, Warsaw, Poland, September 11-13, 2017, Revised Selected Papers.
  Lecture Notes in Computer Science, vol. 10737, pp. 3--21. Springer (2017)

\bibitem{KaskiKW16}
Kaski, P., Kohonen, J., Westerb{\"{a}}ck, T.: {Fast M{\"{o}}bius Inversion in
  Semimodular Lattices and ER-labelable Posets}. Electr. J. Comb.  23(3),
  P3.26 (2016)

\bibitem{KipnisPG99}
Kipnis, A., Patarin, J., Goubin, L.: {Unbalanced Oil and Vinegar Signature
  Schemes}. In: Stern, J. (ed.) Advances in Cryptology - {EUROCRYPT} '99,
  International Conference on the Theory and Application of Cryptographic
  Techniques, Prague, Czech Republic, May 2-6, 1999, Proceeding. Lecture Notes
  in Computer Science, vol. 1592, pp. 206--222. Springer (1999)

\bibitem{LokshtanovPTWY17}
Lokshtanov, D., Paturi, R., Tamaki, S., Williams, R.R., Yu, H.: {Beating Brute
  Force for Systems of Polynomial Equations over Finite Fields}. In: Klein,
  P.N. (ed.) Proceedings of the Twenty-Eighth Annual {ACM-SIAM} Symposium on
  Discrete Algorithms, {SODA} 2017, Barcelona, Spain, Hotel Porta Fira, January
  16-19. pp. 2190--2202. {SIAM} (2017)

\bibitem{NIST}
{NIST's Post-Quantum Cryptography Project},
  \url{https://csrc.nist.gov/Projects/Post-Quantum-Cryptography}

\bibitem{Patarin96}
Patarin, J.: {Hidden Fields Equations {(HFE)} and Isomorphisms of Polynomials
  {(IP):} Two New Families of Asymmetric Algorithms}. In: Maurer, U.M. (ed.)
  Advances in Cryptology - {EUROCRYPT} '96, International Conference on the
  Theory and Application of Cryptographic Techniques, Saragossa, Spain, May
  12-16, 1996, Proceeding. Lecture Notes in Computer Science, vol. 1070, pp.
  33--48. Springer (1996)

\bibitem{Razborov87}
Razborov, A.A.: {Lower bounds on the size of bounded-depth networks over a
  complete basis with logical addition}. Mathematical Notes of the Academy of
  Sciences of the USSR  41(4),  333–--338 (1987)

\bibitem{Smolensky87}
Smolensky, R.: {Algebraic Methods in the Theory of Lower Bounds for Boolean
  Circuit Complexity}. In: Aho, A.V. (ed.) Proceedings of the 19th Annual {ACM}
  Symposium on Theory of Computing, 1987, New York, New York, {USA}. pp.
  77--82. {ACM} (1987)

\bibitem{ThomaeW12}
Thomae, E., Wolf, C.: {Solving Underdetermined Systems of Multivariate
  Quadratic Equations Revisited}. In: Fischlin, M., Buchmann, J.A., Manulis, M.
  (eds.) Public Key Cryptography - {PKC} 2012 - 15th International Conference
  on Practice and Theory in Public Key Cryptography, Darmstadt, Germany, May
  21-23, 2012. Proceedings. Lecture Notes in Computer Science, vol. 7293, pp.
  156--171. Springer (2012)

\bibitem{ValiantV86}
Valiant, L.G., Vazirani, V.V.: {{NP} is as Easy as Detecting Unique Solutions}.
  Theor. Comput. Sci.  47(3),  85--93 (1986)

\bibitem{Williams14}
Williams, R.R.: {The Polynomial Method in Circuit Complexity Applied to
  Algorithm Design (Invited Talk)}. In: Raman, V., Suresh, S.P. (eds.) 34th
  International Conference on Foundation of Software Technology and Theoretical
  Computer Science, {FSTTCS} 2014, December 15-17, 2014, New Delhi, India.
  LIPIcs, vol.~29, pp. 47--60. Schloss Dagstuhl - Leibniz-Zentrum f{\"{u}}r
  Informatik (2014)

\bibitem{YangC04}
Yang, B., Chen, J.: {Theoretical Analysis of {XL} over Small Fields}. In: Wang,
  H., Pieprzyk, J., Varadharajan, V. (eds.) Information Security and Privacy:
  9th Australasian Conference, {ACISP} 2004, Sydney, Australia, July 13-15,
  2004. Proceedings. Lecture Notes in Computer Science, vol. 3108, pp.
  277--288. Springer (2004)

\end{thebibliography}

\appendix

\section{Success Probability Analysis of Algorithm~\ref{alg:parity}}
\label{app:success}

\begin{proof}[of Lemma~\ref{lem:prob1}]
The algorithm is guaranteed to be correct if the scoreboard majority votes are equal to the corresponding parities in the top level multiple parity-counting instance and in all the recursive calls. We choose $t$ such that each scoreboard majority is correct, except with exponentially small probability.

For $(\assign{y},\assign{u}) \in W^{n-n_1}_{w} \times \{0,1\}^{n_1-n_2}$, denote $F'(\assign{y},\assign{u}) = \sum_{\assign{v} \in \{0,1\}^{n_2}} F(\assign{y},\assign{u},\assign{v})$. Also, denote $SB[\assign{y},\assign{u}]$ the scoreboard entry for $(\assign{y},\assign{u})$ and by $M[\assign{y},\assign{u}]$ the majority for this entry.
Recall from~(\ref{eq:approx1}) that $$\Pr[G^{(k)}(\assign{y},\assign{u}) = F'(\assign{y},\assign{u})] \geq \tfrac{3}{4}$$
holds independently for each $k \in \{1,\ldots,t\}$. Since $SB[\assign{y},\assign{u}]$ is the sum of the $t$ random variables $G^{(k)}(\assign{y},\assign{u})$,
\begin{align*}
&\E[SB[\assign{y},\assign{u}] \mid F'(\assign{y},\assign{u}) = 1] \geq \tfrac{3}{4}t, \text{ and}\\
&\E[t - SB[\assign{y},\assign{u}] \mid F'(\assign{y},\assign{u}) = 0] \geq \tfrac{3}{4}t.
\end{align*}
A standard Chernoff bound for a random variable $X$ that is a sum of independent and identically distributed random variables states that for every $0 < \delta < 1$,
$$\Pr[X \leq (1 - \delta)\E[X]] \leq \exp \left(- \tfrac{\delta^2 \E[X]}{2} \right).$$
Since all random variables $\{G^{(k)}(\assign{y},\assign{u})\}_{k =1}^{t}$ are independent and identically distributed, we apply this bound with $\delta = 1/3$ and obtain
\begin{align*}
&\Pr \left[SB[\assign{y},\assign{u}] > \tfrac{t}{2} \mid F'(\assign{y},\assign{u}) = 1 \right] \geq 1 - \exp\left(- \tfrac{t}{24}\right),  \text{ and}\\
&\Pr \left[t - SB[\assign{y},\assign{u}] > \tfrac{t}{2} \mid F'(\assign{y},\assign{u}) = 0 \right] \geq 1 - \exp\left(- \tfrac{t}{24}\right),
\end{align*}
given that $t$ is odd.

For $t = 48n + 1$, we obtain
$$\Pr[M[\assign{y},\assign{u}] = F'(\assign{y},\assign{u})] \geq 1 - 2^{-2n}.$$
Taking a union bound over all scoreboard entries computed by the algorithm (whose number is smaller than $2^n$), we bound its error probability by $2^{-n}$.
\end{proof}

\end{document}